   \newcommand{\InESAVer}[1]{}%
   \def\SubmitVer{TRUE}
   \newcommand{\InESAVer}[1]{#1}%
   \newcommand{\InNotESAVer}[1]{}%
   \newcommand{\InSubmitVer}[1]{}%
   \newcommand{\InNotSubmitVer}[1]{#1}%
   \providecommand{\InESAVer}[1]{}%
   \renewcommand{\InESAVer}[1]{}%
   \newcommand{\InNotESAVer}[1]{#1}%
   \newcommand{\InSubmitVer}[1]{#1}%
   \newcommand{\InNotSubmitVer}[1]{}%
\newcommand{\eps}{\varepsilon}%
\newcommand{\funcA}{f}
\newcommand{\ArrC}{\mathcal{A}}
\newcommand{\ArrX}[1]{\mathcal{A}\pth{#1}}
\newcommand{\cardin}[1]{\left| {#1} \right|}
\newcommand{\AreaOverlap}[2]{\mu\pth{#1, #2}}
\newcommand{\aFunc}{\psi}
\newcommand{\AOA}[1]{\psi\pth{#1}}
\newcommand{\ApproxOverlapC}{\mu'}
\newcommand{\maxOverlap}[2]{\mu_{\max}\pth{#1, #2}}
\newcommand{\areaX}[1]{\mathrm{area}\pth{#1}}
\newcommand{\widthX}[1]{\mathrm{width}\pth{#1}}
\newcommand{\diameterX}[1]{\mathrm{diameter}\pth{#1}}
\newcommand{\diskX}[1]{\mathrm{disk}\pth{#1}}
\newcommand{\SubhroThanks}[1]{\thanks{%
      Department of Computer Science;
      University of Illinois; %
      201 N. Goodwin Avenue; %
      Urbana, IL, 61801, USA; %
      {\tt \si{sroy}9\atgen{}\si{illinois}.\si{edu}}; %
      {\tt \url{http://web.engr.illinois.edu/\string~sroy9/}}%
      . %
      #1}}
\newcommand{\atgen}{\symbol{'100}}
\newcommand{\SarielThanks}[1]{\thanks{%
      Department of
      Computer Science; University of Illinois; 201 N. Goodwin Avenue;
      Urbana, IL, 61801, USA; {\tt
         \si{sariel}\atgen{}\si{uiuc.edu}}; %
      {\tt \url{http://sarielhp.org}.} #1}}
\definecolor{blue25}{rgb}{0,0,0.55}%
\newcommand{\emphic}[2]{%
   \textcolor{blue25}{%
      \textbf{\emph{#1}}}%
   \index{#2}}
\newcommand{\emphi}[1]{\emphic{#1}{#1}}
\newcommand{\pth}[2][\!]{#1\left({#2}\right)}
\newcommand{\Frechet}{Fr\'{e}c{h}e{}t\xspace}%
\newcommand{\remove}[1]{}
\newcommand{\etal}{\textit{et~al.}\xspace}
\providecommand{\si}[1]{#1}
\numberwithin{figure}{section}
   \newtheorem{theorem}{Theorem}[section]
   \newtheorem{lemma}[theorem]{Lemma}%
   \newtheorem{defn}[theorem]{Definition}
   \newtheorem{algorithm}[theorem]{Algorithm}%
   \newtheorem{observation}[theorem]{Observation}%
   \newtheorem{definition}[theorem]{Definition}%
   \newtheorem{problem}[theorem]{Problem}%
   \newtheorem{observation}[theorem]{Observation}%
   \newtheorem{defn}[theorem]{Definition}
   \newtheorem{algorithm}[theorem]{Algorithm}%
\newcommand{\WidthX}[1]{\omega \pth{#1}}
\newcommand{\apndlab}[1]{\label{apnd:#1}}
\newcommand{\apndref}[1]{Appendix~\ref{apnd:#1}}
\newcommand{\obslab}[1]{\label{observation:#1}}
\newcommand{\obsref}[1]{Observation~\ref{observation:#1}}
\newcommand{\constRect}{c_r}
\newcommand{\constC}{c_3}
\newcommand{\constD}{c_4}
\newcommand{\alglab}[1]{\label{alg:#1}}
\newcommand{\algref}[1]{Algorithm~\ref{alg:#1}}
\newcommand{\thmlab}[1]{{\label{theo:#1}}}
\newcommand{\lemlab}[1]{\label{lemma:#1}}
\newcommand{\lemref}[1]{Lemma~\ref{lemma:#1}}
\newcommand{\seclab}[1]{\label{sec:#1}}
\newcommand{\secref}[1]{Section~\ref{sec:#1}}
\newcommand{\figlab}[1]{\label{figure:#1}}
\newcommand{\figref}[1]{Figure~\ref{figure:#1}}
\newcommand{\Ellipse}{\mathcal{E}}
\newcommand{\MakeBig} {\rule[-.2cm]{0cm}{0.4cm}}
\renewcommand{\th}{\si{th}\xspace}
\newcommand{\Trans}{{\mathcal{T}}}
\newcommand{\trans}{{\mathsf{t}}}
\newcommand{\transY}{{\mathsf{t}_\bodyY}}
\newcommand{\PntSet}{\mathsf{S}}
\renewcommand{\Re}{{\rm I\!\hspace{-0.025em} R}}
\newcommand{\brc}[1]{\left\{ {#1} \right\}}
\newcommand{\sep}[1]{\,\left|\, {#1} \MakeBig\right.}
\newcommand{\scaleSimX}[2]{\mathrm{ssim}\pth{#1, #2}}
\newcommand{\Line}{\ell}
\newcommand{\pnt}{\mathsf{p}}%
\newcommand{\pntA}{\mathsf{q}}%
\newcommand{\vecA}{\vec{u}}
\newcommand{\vecB}{\vec{v}}
\newcommand{\vecC}{\vec{z}}
\newcommand{\vecD}{\overrightarrow{w}}
\newcommand{\vecMax}{\overrightarrow{m}}
\newcommand{\vecG}{\vec{g}}
\newcommand{\vecX}{\vec{x}}
\newcommand{\vecXA}{\vec{x_1}}
\newcommand{\vecY}{\vec{y}}
\newcommand{\elpsX}{\Ellipse_\bodyX}
\newcommand{\elpsY}{\Ellipse_\bodyY}
\newcommand{\elpsG}{\Ellipse_\bodyG}
\newcommand{\elpsB}{\Ellipse_\bodyB}
\newcommand{\Poly}{\mathsf{P}}%
\newcommand{\PolyA}{\mathsf{Q}}%
\newcommand{\bodyB}{\mathsf{B}}%
\newcommand{\bodyX}{\mathsf{X}}%
\newcommand{\bodyY}{\mathsf{Y}}%
\newcommand{\bodyXA}{\mathsf{X'}}%
\newcommand{\bodyYA}{\mathsf{Y'}}%
\newcommand{\bodyXT}{\mathsf{X}_\Trans}%
\newcommand{\bodyYT}{\mathsf{Y}_\Trans}%
\newcommand{\bodyXTA}{\mathsf{X}_\Trans'}%
\newcommand{\bodyYTA}{\mathsf{Y}_\Trans'}%
\newcommand{\bodyMax}{\mathsf{M}}%
\newcommand{\bodyC}{\mathsf{Z}}%
\newcommand{\bodyD}{\mathsf{D}}%
\newcommand{\bodyG}{\mathsf{G}}%
\newcommand{\transform}{\mathsf{M}}
\providecommand{\ds}{\displaystyle}
\newcommand{\bbox}{B}
\newcommand{\ceil}[1]{\left\lceil {#1} \right\rceil}
\newcommand{\LevelSetX}[2]{L_{#1}\pth{#2}}
\providecommand{\Arr}{\mathcal{A}}
\newcommand{\epsA}{\epsilon}
\newcommand{\rectA}{\mathsf{r}}
\newcommand{\rectM}{\mathsf{r}_\bodyMax}
\newcommand{\rectX}{\mathsf{r}_\bodyX}
\newcommand{\rectY}{\mathsf{r}_\bodyY}
\newcommand{\rectC}{\mathsf{r}_\bodyC}
\newlength{\savedparindent}
\newcommand{\SaveIndent}{\setlength{\savedparindent}{\parindent}}
\newcommand{\RestoreIndent}{\setlength{\parindent}{\savedparindent}}
\newcommand{\algSSim}{\Algorithm{ssim}\xspace}
\newcommand{\constApproxByRect}{\Algorithm{constApproxByRect}\xspace}
\newcommand{\approxPolyWidth}{\Algorithm{approxPolygon}\xspace}
\newcommand{\approxLevelSet}{\Algorithm{approxLevelSet}\xspace}
\newcommand{\wdX}{\omega_\bodyX}
\newcommand{\wdY}{\omega_\bodyY}
\newcommand{\numPolygons}{\nu}
\newcommand{\totalN}{\rho}
\newcommand{\slice}{\mathsf{S}}
\newcommand{\TSUMHard}{\textsf{3SUM}-Hard\xspace}
\newcommand{\errBX}{\mathcal{E}_\bodyX}%
\newcommand{\errBY}{\mathcal{E}_\bodyY}%
\newcommand{\err}{\mathcal{E}}
\newcommand{\floor}[1]{\left\lfloor {#1} \right\rfloor}
\newcommand{\mysubsubsection}[1]{\subsubsection{#1.}}
\begin{document}
\InNotESAVer{\InSubmitVer{\linenumbers}}%

\title{Approximating the Maximum Overlap of Polygons under
   Translation%
   \InNotESAVer{%
      \footnote{Work on this paper was partially supported by NSF AF
         awards CCF-0915984 and CCF-1217462.}%
   }
}

\author{%
   Sariel Har-Peled%
   \InNotESAVer{\SarielThanks{}}%
   \InESAVer{\inst{1}}%
   \and%
   Subhro Roy%
   \InNotESAVer{\SubhroThanks{}}%
   \InESAVer{\inst{1}}%
}

\InESAVer{%
   \institute{%
      University of Illinois, Urbana-Champaign%
   }%
}

\date{\today}

\maketitle

\begin{abstract}
    Let $\Poly$ and $\PolyA$ be two simple polygons in the plane of
    total complexity $n$, each of which can be decomposed into at most
    $k$ convex parts. We present an $(1-\eps)$-approximation
    algorithm, for finding the translation of $\PolyA$, which
    maximizes its area of overlap with $\Poly$.  Our algorithm runs in
    $O\pth{ c n }$ time, where $c$ is a constant that depends only on
    $k$ and $\eps$.

    This suggest that for polygons that are ``close'' to being convex,
    the problem can be solved (approximately), in near linear time.
\end{abstract}

\InNotESAVer{%
   \InSubmitVer{%
      \thispagestyle{empty}%
      \newpage%
      \setcounter{page}{1}%
   }%
}

\section{Introduction}

Shape matching is an important problem in databases, robotics,
visualization and many other fields.  Given two shapes, we want to
find how similar (or dissimilar) they are.  Typical problems include
matching point sets by the Hausdorff distance metric, or matching
polygons by the Hausdorff or \Frechet distance between their
boundaries. See the survey by Alt and Guibas \cite{ag-dgsmi-00}.

The maximum area of overlap is one possible measure for shape matching
that is not significantly effected by noise.  Mount \etal
\cite{msw-aotp-96} studied the behavior of the area of overlap
function, when one simple polygon is translated over another simple
polygon. They showed that the function is continuous and piece-wise
polynomial of degree at most two. If the polygons $\Poly$ and $\PolyA$
have complexity $m$ and $n$, respectively, the area of overlap
function can have complexity of $\Theta(m^2n^2)$. Known algorithms to
find the maximum of the function work by constructing the entire
overlap function. It is also known that the problem is \TSUMHard
\cite{bh-pctmh-01}, that is, it is believed no subquadratic time
algorithm is possible for the problem.

\paragraph{Approximating maximum overlap of general polygons.}

Cheong \etal \cite{ceh-fgsms-07} gave a $(1 - \eps)$-approximation
algorithm for maximizing the area of overlap under translation of one
simple polygon over the other using random sampling
techniques. However, the error associated with the algorithm is
additive, and the algorithm runs in near quadratic time. Specifically,
the error is an $\eps$ fraction of the area of the smaller of the two
polygons. Under rigid motions, the running time deteriorates to being
near cubic.  More recently, Cheng and Lam \cite{cl-smurm-13} improved
the running times, and can also handle rigid motions, and present a
near linear time approximation algorithm if one of the polygons is
convex.

\paragraph{Maximum overlap in the convex case under translations.}

\si{de Berg} \etal \cite{bcdkt-cmotc-98} showed that finding maximum
overlap translation is relatively easier in case of convex
polygons. Specifically, the overlap function in this case is unimodal
(as a consequence of the Brunn-Minkowski Theorem). Using this
property, they gave a near linear time exact algorithm for computing
the translation that maximizes the area of overlap of two convex
polygons. The complexity of the graph of the overlap function is only
$O\pth{ m^2 + n^2 + \min( m^2n, mn^2)}$ in this case. Alt \etal
\cite{afrw-mcsrs-98} gave a constant-factor approximation for the
minimum area of the symmetric difference of two convex
polygons. 

\paragraph{Approximating maximum overlap in the convex case.}
As for $(1-\eps)$-approximation, assuming that the two polygons are
provided in an appropriate form (i.e., the vertices are in an array in
their order along the boundary of the polygon), then one can get a
sub-linear time approximation algorithm.  Specifically, Ahn \etal
\cite{acpsv-motpc-07} show an $(1-\eps)$-approximation algorithm, with
running time $O((1/\eps)\log (n/\eps))$ for the case of translation,
and $O((1/\eps)\log n + (1/\eps)^2 \log 1/\eps))$ for the case of
rigid motions.  (For a result using similar ideas in higher dimensions
see the work by Chazelle \etal \cite{clm-sga-05}.)

\paragraph{Overlap of union of balls.}

\si{de Berg} \etal \cite{bcgkov-maotu-04} considered the case where
$X$ and $Y$ are disjoint unions of $m$ and $n$ unit disks, with $m
\leq n$. They computed a $(1 - \eps)$ approximation for the maximal
area of overlap of $X$ and $Y$ under translations in time
$O((nm/\eps^2) \log(n/\eps))$.  Cheong \etal \cite{ceh-fgsms-07} gave
an additive error $\eps$-approximation algorithm for this case, with
near linear running time.

\paragraph{Other relevant results.}
Avis \etal \cite{abtszs-osacp-96} computes the overlap of a polytope
and a translated hyperplane in linear time, if the polytope is
represented by a lattice of its faces. Vigneron \cite{v-gosaf-14}
presented $(1-\eps)$-approximation algorithms for maximum overlap of
polyhedra (in constant dimension) that runs in polynomial time.  Ahn
\etal \cite{acky-ocprm-14} approximates the maximum overlap of two
convex polytopes in three dimensions under rigid motions.  Ahn \etal
\cite{acr-mocpt-13} approximates the maximum overlap of two polytopes
in $\Re^d$ under translation in $O\pth{n^{\floor{d/2}+1} \log^d n}$
time.

\section*{Our results}

As the above indicates, there is a big gap between the algorithms known
for the convex and non-convex case. Our work aims to bridge this gap,
showing that for ``close'' to convex polygons, under translation, the
problem can be solved approximately in near linear time.

Specifically, assume we are given two polygons $\Poly$ and $\PolyA$ of
total complexity $n$, such that they can be decomposed into $k$ convex
parts, we show that one can $(1-\eps)$-approximate the translation of
$\PolyA$, which maximizes its area of overlap with $\Poly$, in linear
time (for $k$ and $\eps$ constants). The translation returned has
overlap area which is at least $(1-\eps) \maxOverlap{\Poly}{\PolyA}$,
where $\maxOverlap{\Poly}{\PolyA}$ is the maximum area of overlap of
the given polygons.

\paragraph{Approach.} 

We break the two polygons into a minimum number of convex parts. We
then approximate the overlap function for each pair of pieces
(everywhere). This is required as one cannot just approximate the two
polygons (as done by Ahn \etal \cite{acpsv-motpc-07}) since the
optimal solution does not realize the maximum overlap of each pair of
parts separately, and the alignment of each pair of parts might be
arbitrary.

To this end, if the two convex parts are of completely different
sizes, we approximate the smaller part, and approximate the overlap
function by taking slices (i.e., level sets) of the overlap
function. In the other case, where the two parts are ``large'', which
is intuitively easier, we can approximate both convex parts, and then
the overlap function has constant complexity. Finally, we overlap all
these functions together, argue that the overlap has low complexity,
and find the maximum area of overlap.

Our approach has some overlap in ideas with the work of Ahn \etal
\cite{acpsv-motpc-07}. In particular, a similar distinction between
large and small overlap, as done in \secref{small} and \secref{large}
was already done in \cite[Theorem 17]{acpsv-motpc-07}.

\paragraph{Why the ``naive'' solution fails?}

The naive solution to our problem is to break the two polygons into
$k$ convex polygons, and then apply to each pair of them the
approximation of Ahn \etal \cite{acpsv-motpc-07}. Now, just treat the
input polygon as the union of their respective approximations, and
solve problem using brute force approach. This fails miserably as the
approximation of Ahn \etal \cite{acpsv-motpc-07} captures only the
maximum overlap of the two polygons. It does not, and can not,
approximates the overlap if two convex polygons are translated such
that their overlap is ``far'' from the maximum configuration,
especially if the two polygons are of different sizes. This issue is
demonstrated in more detail in the beginning of \secref{small}.
A more detailed counterexample is presented in \apndref{counter:example}.

\paragraph{Paper organization.}

We start in \secref{prelim} by defining formally the problem, and
review some needed results. In \secref{building}, we build some
necessary tools. Specifically, we start in \secref{better:approx} by
observing that one can get $O(1/\eps)$ approximation of a convex
polygon, where the error is an $\eps$-fraction of the width of the
polygon. In \secref{level:set}, we show how to compute a level set of
the overlap function of two convex polygons efficiently.  In
\secref{shape:overlap}, we show that, surprisingly, the polygon formed
by the maximum overlap of two convex polygons, contains (up to scaling
by a small constant and translation) the intersection of any
translation of these two convex polygons. Among other things this
implies an easy linear time constant factor approximation for the
maximum overlap (which also follows, of course, by the result of Ahn
\etal \cite{acpsv-motpc-07}).  In \secref{approx:overlap:convex}, we
present the technical main contribution of this paper, showing how to
approximate, by a compact representation that has roughly linear
complexity, the area overlap function of two convex polygons.  In
\secref{approx:main} we put everything together and present our
approximation algorithm for the non-convex case.

\section{Preliminaries}%
\seclab{prelim}

For any vector $\trans \in \Re^2$ and a set $\PolyA$, let $\trans +
\PolyA$ denote the translation of $\PolyA$ by $\trans$; formally,
$\trans + \PolyA = \brc{ \trans + \pntA \sep{ \pntA \in \PolyA}}$.
Also let $\AreaOverlap{\Poly}{\PolyA} = \areaX{\Poly \cap \PolyA}$,
which is the \emphi{area of overlap} of sets $\Poly$ and $\PolyA$.  We
are interested in the following problem.

\begin{problem}
    We are given two polygons $\bodyX$ and $\bodyY$ in the plane, such
    that each can be decomposed into at most $k$ convex polygons. The
    task is to compute the translation $\trans$ of $\bodyY$, which
    maximizes the area of overlap between $\bodyX$ and $\trans +
    \bodyY$. Specifically our purpose is to approximate the quantity
    \begin{align*}
	\maxOverlap{\bodyX}{\bodyY} = \max_{\trans \in \Re^2}\;
        \AreaOverlap{\bodyX}{ \trans + \bodyY }.
    \end{align*}    
\end{problem}


For a polygon $\Poly$, let $\cardin{\Poly}$ denote the number of
vertices of $\Poly$. For $\bodyX, \bodyY \subseteq \Re^d$, the set
$\bodyX$ is \emphi{contained under translation} in $\bodyY$, denoted
by $\bodyX \sqsubseteq \bodyY$, if there exists $\vecX$ such that
$\vecX + \bodyX \subseteq \bodyY$.

\paragraph{Unimodal.}
A function $f:\Re\rightarrow \Re$ is \emphi{unimodal}, if there is a
value $\alpha$, such that $f$ is monotonically increasing (formally,
non-decreasing) in the range $[-\infty, \alpha]$, and $f$ is
monotonically decreasing (formally, non-increasing) in the interval
$[\alpha, +\infty]$.

\paragraph{From width to inner radius.}
For a convex polygon $\Poly$, the \emphi{width} of $\Poly$, denoted by
$\WidthX{\Poly}$, is the minimum distance between two parallel lines
that enclose $\Poly$.

\begin{lemma}[\cite{gk-iojrc-92}]%
    \lemlab{width:to:ball}%
    For a convex shape $\bodyX$ in the plane, we have that the largest
    disk enclosed inside $\bodyX$, has radius at least
    $\widthX{\bodyX}/2\sqrt{3}$.
\end{lemma}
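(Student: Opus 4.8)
The plan is to analyze the largest inscribed disk directly through its contact points with the boundary of $\bodyX$. Let $D$ be a largest disk contained in $\bodyX$, with center $\cen$ and radius $r$; the goal is to show $\widthX{\bodyX} \le 2\sqrt{3}\,r$ (in fact one can get the sharper $\le 3r$). The starting point is the standard optimality characterization of the inball: since $D$ can neither be enlarged nor recentered while remaining inside $\bodyX$, the points where $D$ touches $\partial \bodyX$ cannot all lie strictly to one side of some line through $\cen$. From this one extracts either two diametrically opposite contact points, or three contact points $\pnt_1,\pnt_2,\pnt_3$ whose outward unit normals $\vecA_1,\vecA_2,\vecA_3$ (the directions $(\pnt_i-\cen)/r$) positively span the plane, i.e.\ $\mathbf{0}\in\operatorname{conv}\{\vecA_1,\vecA_2,\vecA_3\}$.

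In the first case, the two supporting lines of $\bodyX$ at the antipodal contact points are parallel, each tangent to $D$, and together they enclose $\bodyX$. Hence $\bodyX$ lies in a slab of width $2r$, so $\widthX{\bodyX}\le 2r \le 2\sqrt{3}\,r$, which is stronger than what is needed.

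In the second case, let $\Line_1,\Line_2,\Line_3$ be supporting lines of $\bodyX$ at $\pnt_1,\pnt_2,\pnt_3$; each is tangent to $D$ and keeps $\bodyX$ on its inner side. Because the three outward normals positively span the plane, the intersection of the three inner half-planes is a bounded triangle $T$ with $\bodyX \subseteq T$, and $D$ is tangent to all three sides of $T$, so $D$ is exactly the incircle of $T$ and $T$ has inradius $r$. Width is monotone under inclusion (the support function can only grow, so $\widthX{\bodyX}\le \widthX{T}$). For a triangle the minimal width equals the smallest altitude, namely $2\areaX{T}/a$ where $a$ is the longest side; writing $\areaX{T}=r\,s$ with $s$ the semiperimeter gives $\widthX{T} = \bigl((a+b+c)/a\bigr)\,r$. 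Since $a$ is the longest side we have $a\ge b$ and $a\ge c$, hence $(a+b+c)/a \le 3$, and therefore $\widthX{\bodyX}\le \widthX{T}\le 3r\le 2\sqrt{3}\,r$, which yields $r\ge \widthX{\bodyX}/(2\sqrt{3})$. Equality in the triangle step is attained by the equilateral triangle, so the sharp constant is in fact $3$ (this is Steinhagen's inequality in the plane), comfortably implying the stated bound.

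The hard part will be the optimality characterization of the inball: justifying rigorously that a largest inscribed disk is ``trapped'' by at most three contact points whose normals positively span the plane. This is a compactness-and-convexity argument — if all contact points lay strictly to one side of a line through $\cen$, one could nudge $\cen$ to that side and then enlarge the disk, contradicting maximality — combined with a Carath\'eodory-type reduction to at most three normals in the plane. The remaining ingredients, namely monotonicity of width under inclusion and the identification of the minimal width of a triangle with its smallest altitude, are elementary.
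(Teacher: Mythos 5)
The paper offers no proof of this lemma at all---it is imported as a black box from Gritzmann and Klee \cite{gk-iojrc-92}---so there is nothing internal to compare against; your argument stands or falls on its own, and it stands. The structure is the classical one: the contact points of a maximal inscribed disk have outward normals whose convex hull contains the origin (else the disk could be nudged and grown), Carath\'eodory reduces this to two antipodal normals or three positively spanning ones, and the two cases give $\widthX{\bodyX}\le 2r$ and $\widthX{\bodyX}\le\widthX{T}\le 3r$ respectively, via the incircle of the circumscribing triangle $T$ and the identity ``smallest altitude $=2\cdot\mathrm{area}/(\text{longest side})$'' with $\mathrm{area}=rs$. Each ingredient checks out: a supporting line of $\bodyX$ at a contact point necessarily supports the disk there, hence is the tangent, so the normals are well defined even at vertices of $\bodyX$; and in the three-point case the origin lies in the \emph{interior} of the hull of the normals (the boundary case degenerates to two antipodal normals, i.e.\ Case 1), which is what guarantees $T$ is a bounded triangle. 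You even obtain the sharp planar constant $3$ (Steinhagen), which is stronger than the $2\sqrt{3}$ the paper states---the latter is in fact the three-dimensional constant from \cite{gk-iojrc-92}, and any constant at most $2\sqrt 3$ suffices for its use in Lemma~\ref{lemma:bounded:width}. The one step you rightly flag as needing care, the optimality characterization of the inball, is the standard compactness argument you sketch and is correct as stated.
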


\paragraph{Convex Decomposition of Simple Polygons.}

A vertex of a polygon is a \emphi{notch} if the internal angle at this
vertex is reflex (i.e. $> 180 \degree$).  For a non-convex polygon
$\Poly$ with $n$ vertices and $r$ notches, Keil and Snoeyink
\cite{ks-otbcd-02} solves the minimal convex decomposition problem in
$O\pth{n + r^2\min( r^2, n)}$ time, that is, they compute a
decomposition of $\Poly$ into minimum number of convex
polygons. Observe, that if the number of components in the minimum
convex decomposition is $k$, the number of notches $r$ is upper
bounded by $2k$.

\paragraph{Scaling similarity between polygons.}

For two convex polygons $\bodyX$ and $\bodyY$, let us define their
\emphi{scaling similarity}, denoted by $\scaleSimX{\bodyX}{\bodyY}$,
as the minimum number $\alpha \geq 0$, such that $\bodyX \sqsubseteq
\alpha \bodyY$ .  Using low-dimensional linear programming, one can
compute $\scaleSimX{\bodyX}{\bodyY}$ in linear time. In particular,
the work by Sharir and Toledo \cite{st-epcp-94} implies the following.

\begin{lemma}[\algSSim]%
    \lemlab{scaling}%
    Given two convex polygons $\bodyX$ and $\bodyY$ of total
    complexity $n$, one can compute, in linear time,
    $\scaleSimX{\bodyX}{\bodyY} $, and the translation that realizes
    it.
\end{lemma}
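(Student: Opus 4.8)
The plan is to recognize $\scaleSimX{\bodyX}{\bodyY}$ as the scaling factor of the smallest homothetic copy of $\bodyY$ that contains $\bodyX$ under translation, and to compute it as the optimum of a fixed-dimensional linear program. First I would fix a point in the interior of $\bodyY$ as the origin, so that $\bodyY$ is the intersection of its edge halfplanes, $\bodyY = \bigcap_{i=1}^{m}\brc{ \pnt \in \Re^2 \sep a_i \cdot \pnt \le b_i}$, where $a_1,\dots,a_m$ are the outward edge normals, $m = \cardin{\bodyY}$, and every $b_i > 0$. Scaling about the origin then gives, for $\alpha > 0$, the representation $\alpha \bodyY = \bigcap_i \brc{ \pnt \sep a_i \cdot \pnt \le \alpha\, b_i}$.

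Since both bodies are convex, the containment $\vecX + \bodyX \subseteq \alpha \bodyY$ holds if and only if every point of the translated $\bodyX$ satisfies each halfplane, which in turn holds if and only if the \emph{extreme} point of $\bodyX$ in each normal direction does. Writing $h_{\bodyX}(a) = \max_{\pnt \in \bodyX} a \cdot \pnt$ for the support function of $\bodyX$, the containment is thus equivalent to the $m$ linear inequalities
\[
a_i \cdot \vecX + h_{\bodyX}(a_i) \;\le\; \alpha\, b_i, \qquad i = 1, \dots, m.
\]
Consequently $\scaleSimX{\bodyX}{\bodyY}$ is the value of the program that minimizes $\alpha$ over the three variables $(\vecX, \alpha) \in \Re^2 \times \Re$ subject to these constraints and $\alpha \ge 0$, and the optimal $\vecX$ is exactly the realizing translation. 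This is precisely an extremal polygon containment problem in the sense of Sharir and Toledo \cite{st-epcp-94}.

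Two observations make this run in linear time. The reduced system has only $O(m) = O(n)$ constraints, one per edge of $\bodyY$; this is the crucial point, since the naive formulation that forces every vertex of $\bodyX$ into every halfplane of $\bodyY$ would have $\Theta(n^2)$ constraints. To build the reduced system I would compute the $m$ support values $h_{\bodyX}(a_i)$ in $O(\cardin{\bodyX} + m)$ total time: as the normal $a_i$ rotates monotonically through the (cyclically ordered) edge normals of $\bodyY$, the supporting vertex of $\bodyX$ advances monotonically around $\bodyX$, so a single synchronized angular sweep (rotating calipers) produces all of them. With the $O(n)$ constraints in hand, the program has a fixed number of variables (three) and is therefore solvable in time linear in the number of constraints by fixed-dimensional linear programming (Megiddo/Dyer, or the framework of \cite{st-epcp-94}), yielding both $\alpha$ and the translation in $O(n)$ time.

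The main obstacle is exactly this reduction from $\Theta(n^2)$ to $O(n)$ constraints: one must argue that only the single extreme vertex of $\bodyX$ per edge-direction of $\bodyY$ is binding (immediate from convexity via the support function) and, to keep the time linear, that all these extreme vertices are obtained by one monotone sweep rather than $m$ independent searches. The remaining points are routine: choosing the origin inside $\bodyY$ so that all $b_i > 0$ keeps the homothet family monotone and forces $\alpha > 0$ at the optimum, and feasibility and boundedness of the program follow since any sufficiently large homothet of $\bodyY$ contains a translate of the bounded polygon $\bodyX$.
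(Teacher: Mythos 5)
Your proposal is correct and follows exactly the route the paper indicates: the paper gives no proof of its own, merely noting that the result follows from low-dimensional linear programming via the extremal polygon containment framework of Sharir and Toledo \cite{st-epcp-94}, and your explicit formulation---one halfplane constraint per edge of $\bodyY$, with the support values $h_{\bodyX}(a_i)$ obtained by a single synchronized rotating sweep, followed by a three-variable LP solved in linear time---is precisely the standard argument behind that citation. The reduction from $\Theta(n^2)$ to $O(n)$ constraints via the support function, which you correctly identify as the crux, is handled properly.
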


\section{Building blocks}
\seclab{building}

\subsection{A better convex approximation in the plane}
\seclab{better:approx}

Let $\bbox$ be the minimum volume bounding box of some bounded convex
set $K \subseteq \Re^d$. We have that $v + c_d \bbox \subseteq K
\subseteq \bbox$ \cite{h-gaa-11}, for some vector $v$ and a constant
$c_d$ which depends only on the dimension $d$.  This approximation can
be computed in $O(n)$ time \cite{bh-eamvb-01}, where $n$ is the number
of vertices of the convex-hull of $K$.  The more powerful result
showing that a convex body can be approximated by an ellipsoid (up to
a scaling factor of $d$), is known as John's Theorem \cite{h-gaa-11}.

We need the following variant of the algorithm of Barequet and
Har-Peled \cite{bh-eamvb-01}.

\begin{lemma}
    \lemlab{fast:convex:approx}%
    Given a convex polygon $\bodyC$ in the plane, with $n$ vertices,
    one can compute, in linear time, a rectangle $\rectC$ and a point
    $\vecC$, such that $\vecC + \rectC \subseteq \bodyC \subseteq
    \vecC + 5 \rectC$.
\end{lemma}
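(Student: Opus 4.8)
The plan is to reduce the claim to the minimum-volume bounding-box approximation quoted just above the statement, and then to check that the factor $5$ comfortably absorbs all the slack. Using the linear-time variant of the algorithm of Barequet and Har-Peled \cite{bh-eamvb-01}, I would first compute a bounding box $\bbox$ of $\bodyC$ together with a vector $v$ and a constant $c$ (in the plane one may take $c \ge 1/3$) such that
\[
    v + c\, \bbox \;\subseteq\; \bodyC \;\subseteq\; \bbox .
\]
Here $\bbox$ is a (possibly rotated) rectangle with $\bodyC \subseteq \bbox$, and $v + c\,\bbox$ is an inscribed, shrunken copy of it. If the constant supplied by the approximate algorithm is not large enough, one can instead compute the exact minimum-area enclosing rectangle in $O(n)$ time by rotating calipers, which guarantees $c = 1/2$; either way the box and its inscribed copy are obtained in linear time.

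Next I would massage this into the stated form. Write $\bbox = o + \bbox_0$, where $o$ is the center of $\bbox$ and $\bbox_0$ is the same rectangle centered at the origin, and let $\vecC := v + c\,o$ be the center of the inscribed box $v + c\,\bbox$. I then set $\rectC := c\,\bbox_0$, i.e. the inscribed box translated so that it is centered at the origin; with this choice $5\rectC$ is exactly the concentric $5$-fold blow-up of $\rectC$, so that $\vecC + \rectC$ and $\vecC + 5\rectC$ are concentric boxes about $\vecC$. By construction $\vecC + \rectC = v + c\,\bbox \subseteq \bodyC$, so the inner containment comes for free.

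The only real content is the outer containment $\bodyC \subseteq \vecC + 5\rectC$. Since $\bodyC \subseteq \bbox$, it suffices to show $\bbox \subseteq \vecC + 5\rectC$. As both are axis-aligned in the frame of $\bbox$, this reduces, in each of the two principal directions, to a one-dimensional inclusion of intervals. Writing $h$ for a half-width of $\bbox_0$ in a fixed direction, the outer box has half-width $5c\,h$ and is centered at $\vecC$, while $\bbox$ has half-width $h$ and is centered at $o$, so the inclusion holds provided
\[
    |\vecC - o| + h \;\le\; 5c\,h
\]
in that direction. Because the inscribed box lies inside $\bbox$, its center $\vecC$ is within distance $(1-c)h$ of $o$ along each axis, hence the left-hand side is at most $(2-c)h$, and the inequality follows from $(2-c) \le 5c$, i.e. from $c \ge 1/3$. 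Since $c \ge 1/3$ (indeed $c = 1/2$ for the exact minimum-area rectangle), the factor $5$ suffices with room to spare, and the whole computation is dominated by the bounding-box step, giving $O(n)$ time.

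The main obstacle — really the only delicate point — is the bookkeeping around the scaling: the expression $5\rectC$ scales about the origin, so $\rectC$ must be positioned (centered at the origin) in order to make $\vecC + \rectC$ and $\vecC + 5\rectC$ concentric. Once this normalization is arranged, the geometry collapses to the one-dimensional interval inequality above, and the gap between the required constant $3$ and the stated constant $5$ is exactly what makes the lemma robust to the precise value of $c$ returned by the cited approximation.
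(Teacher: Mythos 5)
Your reduction and the recentering arithmetic are correct as far as they go: if one has a bounding rectangle $\bbox$ of $\bodyC$ together with a translate of $c\,\bbox$ inside $\bodyC$, then centering $\rectC$ at the inscribed copy and blowing up by a factor $\lambda$ works in each principal direction as soon as $(2-c)\le \lambda c$, so $\lambda=5$ does suffice whenever $c\ge 1/3$. The gap is that the bound $c\ge 1/3$ \emph{is} the geometric content of the lemma, and you assert it rather than prove it. The fact quoted in the paper (from \cite{h-gaa-11}) only guarantees \emph{some} dimension-dependent constant $c_d$ for the minimum-volume bounding box; nothing in the cited statements pins it down to $1/3$ or better, and if the available constant were, say, $1/4$, your argument would only yield the factor $7$. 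The parenthetical claims --- ``$c\ge 1/3$'' for the approximation of \cite{bh-eamvb-01} and ``$c=1/2$'' for the exact minimum-area enclosing rectangle --- are nontrivial statements in their own right: the $1/2$ bound is already tight for triangles, and any proof of it must exploit optimality of the rectangle over \emph{all} orientations, since a bounding rectangle that merely touches $\bodyC$ on all four sides can fail arbitrarily badly (take $\bodyC$ a thin sliver along a diagonal of the rectangle). As written, the proof defers its entire quantitative content to an uncited and unproven claim.

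The paper sidesteps this by constructing the rectangle directly: compute the diameter $uv$ of $\bodyC$ by rotating calipers, take the vertex $w$ farthest from the line through $uv$, and inscribe in the triangle $uvw$ the rectangle with base on $uv$ and half the triangle's height $h$. This rectangle has explicit dimensions (half the length of $uv$ by $h/2$), while $\bodyC$ projects onto the segment $uv$ in that direction and lies within distance $h$ of the line $uv$; the factor $5$ then follows from exactly the kind of interval arithmetic you use, but with all quantities explicit (and $5$ is in fact tight for that construction in the direction perpendicular to $uv$). If you want to keep your route, you must either supply a proof that the minimum-area enclosing rectangle admits an inscribed translate of its $1/3$-scaling, or replace the black-box bounding rectangle with one whose inscribed constant is explicit --- which is essentially what the paper's construction does.
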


\noindent\begin{minipage}{\linewidth}
\parpic[r]{%
   \begin{minipage}{0.3\linewidth}
       \includegraphics[width=0.99\linewidth]{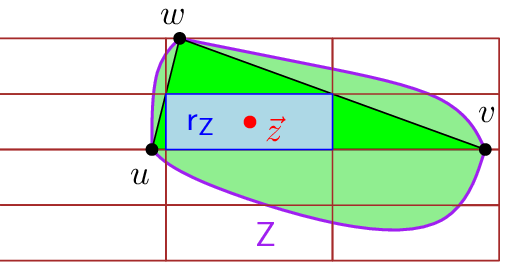}
   \end{minipage}}

\SaveIndent

\noindent\begin{minipage}{0.65\linewidth}
    \begin{proof}
        \RestoreIndent This is all well known, and we include the
        details for the sake of completeness. Using rotating caliper
        \cite{t-sgprc-83} compute the two vertices $u$ and $v$ of
        $\bodyC$ realizing its diameter.  Let $w$ be the vertex of
        $\bodyC$ furthest away from $u v$, Consider the rectangle
        $\rectC'$ having its base on $uv$, having half the height of
        $\triangle uv w$, and contained inside this triangle. Now, let
        $\vecC$ be the center of $\rectC'$, and set $\rectC = \rectC'
        - \vecC$, see figure on the right. It is now easy to verify
        that the claim holds with $\rectC$ and $\vecC$.  
        \InESAVer{ \qed}
    \end{proof}
\end{minipage}
\end{minipage}

\begin{observation}%
    \obslab{transformation}%
    Given two bodies $\bodyX, \bodyY \subseteq \Re^2$ and a
    non-singular affine transformation $\transform$, we have $\ds
    \frac{\areaX{\bodyX}}{ \areaX{\bodyY}}%
    =%
    \frac{\areaX{\transform(\bodyX)}}{ \areaX{\transform(\bodyY)}}$.
\end{observation}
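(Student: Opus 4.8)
The plan is to reduce the claim to the single classical fact that a linear map rescales planar area by the absolute value of its determinant, and to observe that this scaling factor is common to both bodies and therefore cancels in the ratio.

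First I would write the non-singular affine transformation in the standard form $\transform(\pnt) = A\pnt + \vecT$, where $A$ is the $2 \times 2$ matrix giving the linear part of $\transform$, with $\det A \neq 0$ by non-singularity, and $\vecT \in \Re^2$ is the translation part. The key ingredient is the change-of-variables (Jacobian) identity: for any measurable body $\bodyC \subseteq \Re^2$,
\begin{align*}
    \areaX{\transform(\bodyC)} = \cardin{\det A} \cdot \areaX{\bodyC},
\end{align*}
since translation by $\vecT$ preserves area (its Jacobian is the identity, of determinant $1$), while the linear part $A$ scales area by the factor $\cardin{\det A}$.

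Applying this identity separately to $\bodyX$ and $\bodyY$ yields $\areaX{\transform(\bodyX)} = \cardin{\det A}\, \areaX{\bodyX}$ and $\areaX{\transform(\bodyY)} = \cardin{\det A}\, \areaX{\bodyY}$. Taking the quotient, the common factor $\cardin{\det A}$ -- which is strictly positive precisely because $\transform$ is non-singular -- cancels, giving
\begin{align*}
    \frac{\areaX{\transform(\bodyX)}}{\areaX{\transform(\bodyY)}}
    = \frac{\cardin{\det A}\, \areaX{\bodyX}}{\cardin{\det A}\, \areaX{\bodyY}}
    = \frac{\areaX{\bodyX}}{\areaX{\bodyY}},
\end{align*}
as claimed. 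I would also remark that the ratio is well-defined under the implicit assumption $\areaX{\bodyY} > 0$: non-singularity of $\transform$ guarantees $\areaX{\transform(\bodyY)} > 0$ as well, so no degenerate quotient arises.

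There is no genuine obstacle here; the statement is little more than a restatement of the determinant scaling law for areas. The only point worth emphasizing is that the determinant factor does not depend on the body, so it appears identically in the numerator and denominator and disappears upon forming the quotient.
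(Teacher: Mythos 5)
Your proof is correct: the paper states this as an unproved observation, treating it as immediate, and your argument --- that the affine map scales all areas by the common factor $\cardin{\det A}$, which cancels in the ratio --- is exactly the standard justification being implicitly invoked. Nothing is missing, and your remark that non-singularity keeps the denominator positive is a reasonable bit of added care.
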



Since a similar construction is described by Ahn \etal
\cite{acpsv-motpc-07}, we delegate the proof of this lemma to
\apndref{fast:convex:approx}.

\begin{lemma}[\approxPolyWidth]%
    \lemlab{approx:scheme}%
    Given a convex polygon $\Poly$, and a parameter $m > 0$, we can
    compute, in $O( \cardin{\Poly})$ time, a convex polygon $\Poly'$
    with $O(m)$ vertices, such that
    \begin{inparaenum}[(i)]
        \item $\Poly' \subseteq \Poly$, and
        \item for any point $\pnt \in \Poly$, its distance from
        $\Poly'$ is at most $\WidthX{\Poly} / m$, where
        $\WidthX{\Poly}$ is the width of $\Poly$.
    \end{inparaenum}
\end{lemma}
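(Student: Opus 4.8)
The plan is to build $\Poly'$ as the convex hull of a carefully chosen set of $O(m)$ points lying on $\partial\Poly$, where the points are obtained by cutting the two monotone boundary chains of $\Poly$ at equally spaced \emph{values} (not equally spaced positions). First I would invoke \lemref{fast:convex:approx} to obtain, in linear time, a rectangle $\rectC$ and a point $\vecC$ with $\vecC+\rectC\subseteq\Poly\subseteq\vecC+5\rectC$. After a rigid motion (which preserves distances and $\WidthX{\Poly}$) assume $\vecC$ is the origin and $\rectC=[-a,a]\times[-b,b]$ with $a\le b$. Then the $x$-extent of $\Poly$ is at most $10a$, while $\WidthX{\Poly}\ge\WidthX{\rectC}=2a$ since $\rectC\subseteq\Poly$. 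I set the sampling resolution to $\delta=2a/m\le\WidthX{\Poly}/m$.

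Next I would split $\partial\Poly$ into a left chain $x=l(y)$ and a right chain $x=r(y)$; convexity makes $l$ convex and $r$ concave, and both take values in an interval of length at most the $x$-extent, hence at most $10a$. I would cut each chain at its extreme value ($\arg\max$ of $r$ and $\arg\min$ of $l$) into two monotone pieces, and on each monotone piece insert a sample point wherever the chain value crosses an integer multiple of $\delta$, also inserting the topmost and bottommost vertices of $\Poly$. Each monotone piece has value-range at most $10a$ and the spacing is $\delta=2a/m$, so it contributes at most $10a/\delta+1=5m+1$ samples; the four pieces together give $O(m)$ points, all found by one march around $\partial\Poly$ in $O(\cardin{\Poly})$ time. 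Let $\Poly'$ be their convex hull, which has $O(m)$ vertices.

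Property (i) is immediate, since the samples lie on $\partial\Poly$ and so $\Poly'\subseteq\Poly$ by convexity. For property (ii) the decisive point is that I measure the error \emph{horizontally}. Between two consecutive samples on a monotone piece the chain value changes by at most $\delta$, so the true chain stays within horizontal distance $\delta$ of the segment joining the two samples, and that segment lies in $\Poly'$. Thus every boundary point of $\Poly$ is within horizontal distance $\delta$ of $\Poly'$; slicing $\Poly$ by horizontal lines and using that the extreme vertices were inserted (so $\Poly'$ spans the full $y$-range of $\Poly$), every point of $\Poly$ is within horizontal, hence Euclidean, distance $\delta\le\WidthX{\Poly}/m$ of $\Poly'$.

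The main obstacle, and the reason for this specific design, is controlling the vertex count. A long thin polygon can have $y$-extent arbitrarily larger than its width, so any scheme that samples at equally spaced $y$-positions — or that first rounds an affine image of $\Poly$ and approximates it uniformly — would spend a number of vertices growing with the aspect ratio, not $O(m)$. The fix is twofold: sample the chains at equally spaced \emph{values}, and bound the Hausdorff error by the horizontal deviation alone. This makes the count depend only on the value-range of the chains, which is $O(\WidthX{\Poly})$, against the resolution $\delta=\Theta(\WidthX{\Poly}/m)$, giving $O(m)$ vertices regardless of how elongated $\Poly$ is. The only delicate spot is the top and bottom caps of $\Poly$, where the boundary becomes nearly horizontal; inserting the extreme vertices explicitly handles these and guarantees $\Poly'$ covers the entire vertical range of $\Poly$.
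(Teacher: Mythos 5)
Your construction is correct and is essentially the paper's: both form $\Poly'$ as the convex hull of the intersections of $\partial\Poly$ with $O(m)$ equally spaced parallel lines perpendicular to (an approximation of) the width direction, together with the extreme vertices, and both bound the error by the slab spacing measured along that normal direction. The only real difference is that you take the slicing direction from the short axis of the rectangle of \lemref{fast:convex:approx} rather than from the exact width direction computed by rotating calipers, which costs only a constant factor (at most $5m+1$ slicing lines instead of $m+1$) and changes nothing essential.
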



\subsection{The level set of the area of overlap %
   function}
\seclab{level:set}

\begin{definition}
    The superlevel set of a function $\funcA: \Re^d \rightarrow \Re$,
    for a value $\alpha$ is the set
    \begin{math}
        \LevelSetX{\alpha}{\funcA} = \brc{ \pnt \in \Re^d\sep{
              \funcA\pth{\pnt} \geq \alpha}}.
    \end{math}
    We will refer to it as the \emphi{$\alpha$-slice} of $\funcA$.
\end{definition}

\begin{lemma}%
    \lemlab{levelset}%
    Given two convex polygons $\bodyX$ and $\bodyY$, the slice
    $\bodyC = \LevelSetX{\alpha}{\AreaOverlap{\bodyX}{\trans +
          \bodyY}}$ is convex, and has complexity $O\pth{ m}$, where
    $m = \cardin{\bodyX} \cardin{\bodyY}$.  Furthermore, given a point
    $\pnt \in \bodyC$, the convex body $\bodyC$ can be computed in $O(
    m \log m )$ time.
\end{lemma}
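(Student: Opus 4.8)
The plan is to prove the three assertions --- convexity, the $O(m)$ complexity bound, and the $O(m\log m)$ construction --- in that order, letting the Brunn--Minkowski inequality do the heavy lifting for the first two.

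\textbf{Convexity.} Set $g(\trans) = \sqrt{\AreaOverlap{\bodyX}{\trans + \bodyY}}$, and I first show $g$ is concave. Fix translations $\trans_1,\trans_2$, let $I_i = \bodyX \cap (\trans_i + \bodyY)$, and let $\lambda\in[0,1]$ with $\bar\trans = \lambda\trans_1 + (1-\lambda)\trans_2$. Since $\bodyX$ and $\bodyY$ are convex, the Minkowski combination satisfies $\lambda I_1 + (1-\lambda) I_2 \subseteq \bodyX \cap (\bar\trans + \bodyY)$ (it stays inside $\bodyX$ and inside the averaged translate of $\bodyY$). The planar Brunn--Minkowski inequality then yields $g(\bar\trans) \ge \lambda\, g(\trans_1) + (1-\lambda)\, g(\trans_2)$. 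Hence $\bodyC = \LevelSetX{\alpha}{\AreaOverlap{\bodyX}{\trans + \bodyY}} = \brc{\trans : g(\trans) \ge \sqrt{\alpha}}$ is a superlevel set of a concave function, so it is convex; it is closed since the overlap is continuous, and bounded since the overlap vanishes for all sufficiently large $\trans$.

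\textbf{Complexity.} Recall that the overlap function is continuous and piecewise quadratic. Subdivide translation space into maximal cells on which the combinatorial type of $\bodyX \cap (\trans + \bodyY)$ is constant, so the overlap is a single quadratic on each cell; a cell boundary is crossed exactly when a vertex of one polygon passes over an edge of the other, hence it lies on one of the $O(\cardin{\bodyX}\cardin{\bodyY}) = O(m)$ \emph{breakpoint segments}. On each cell the level curve $\brc{g = \sqrt{\alpha}}$ is a conic arc. The decisive observation is that $\partial\bodyC$ is convex, so it meets each breakpoint segment (indeed each supporting line) in at most two points; therefore $\partial\bodyC$ splits into $O(m)$ conic arcs and $\bodyC$ has complexity $O(m)$.

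\textbf{Construction.} Given $\pnt\in\bodyC$, I first locate a point of $\partial\bodyC$ by shooting a ray from $\pnt$: the restriction of the overlap to this ray is unimodal (it is the square of the concave $g$), so a binary search --- each step evaluating the overlap by intersecting two convex polygons in $O(\cardin{\bodyX}+\cardin{\bodyY})$ time --- finds the boundary crossing. From there I trace $\partial\bodyC$: maintain $\bodyX \cap (\trans + \bodyY)$ and the current quadratic as $\trans$ follows the conic arc, advance to the first vertex--edge incidence ending the current cell, apply the resulting $O(1)$ update to the intersection polygon and to the quadratic, and repeat. Using a priority queue of candidate incidence events keyed by a parameter that increases monotonically along $\partial\bodyC$ (e.g.\ polar angle about the interior point $\pnt$, which is monotone by convexity), each of the $O(m)$ events costs $O(\log m)$, for $O(m\log m)$ overall.

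\textbf{Main obstacle.} The structural claims are easy; the construction step is where the work lies. Two points need care: (i) that the number of combinatorial events met while traversing $\partial\bodyC$ is $O(m)$ --- which is exactly the crossing bound from the complexity argument --- and (ii) that each event is handled in amortized $O(\log m)$, which requires updating only the $O(1)$ constraints affected by the incidence rather than re-solving the cell exit against all active edges of the intersection polygon, and arguing that the queued candidate events remain consistent as the governing quadratic changes from cell to cell. Establishing (ii) cleanly is the delicate part, and the concavity of $g$ is precisely what keeps both the event count and the per-direction boundary point well defined.
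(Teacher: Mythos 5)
Your convexity and complexity arguments are sound and essentially match the paper's: the paper obtains convexity by citing the unimodality of the overlap along every line (de Berg et al.), which you re-derive from Brunn--Minkowski, and in both write-ups the $O(m)$ bound comes from the fact that the convex curve $\partial\bodyC$ meets each of the $O(m)$ vertex--edge incidence lines at most twice. The genuine gap is in the construction step, and it sits exactly where you flag the argument as ``delicate'' but do not close it. Your priority queue is keyed by the polar angle at which $\partial\bodyC$ crosses a candidate breakpoint segment, but that crossing point depends on the shape of $\partial\bodyC$ in cells you have not yet visited: the governing quadratic changes at every cell boundary, so an event time computed with the current conic is stale as soon as you enter the next cell. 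Either you recompute all pending events at each crossing (there can be $\Theta(n)$ of them, giving $O(mn)$ overall), or you keep stale keys and the ``extract first event'' operation is no longer correct. There is no obvious certificate structure in which only $O(1)$ events change per crossing, because which line $\partial\bodyC$ hits next is a question about the face of the line arrangement containing the current arc, and that face can have many edges. A secondary issue: the initial ``binary search along a ray'' has no combinatorial termination bound, since the overlap restricted to the ray is piecewise quadratic with up to $O(m)$ pieces and you are searching over a continuum.

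The paper sidesteps this by recasting the tracing as a walk in the arrangement $\Arr$ of the $O(m)$ lines supporting the breakpoint segments: starting from a point of $\partial\bodyC$ on one of these lines, it traverses the outer zone of $\partial\bodyC$ face by face, walking clockwise along the edges of the current face until $\partial\bodyC$ is met again, and it implements each step in $O(\log m)$ amortized time with a data structure for walking in a planar line arrangement (dynamic convex hull maintenance, Brodal--Jacob). Since the zone of a convex curve in $m$ lines has near-linear complexity, this gives $O(m\log m)$ total. To repair your version you would need to substitute such a structure (or an equivalent ray-shooting mechanism in a line arrangement) for the polar-angle priority queue; as written, the claimed $O(\log m)$ per event is unsupported.
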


\begin{proof:in:appendix}{\lemref{levelset}}
    Along any line, the function $\AreaOverlap{\bodyX}{\trans +
       \bodyY}$ is unimodal \cite{bcdkt-cmotc-98}. This directly
    implies that on any segment joining two points of the boundary of
    an $\alpha$-slice, the function will have values greater than
    $\alpha$, and hence, it will lie inside the $\alpha$-slice.
    Therefore, $\bodyC$ is convex.
    
    As one translates $\bodyY$ over $\bodyX$, as long as the same
    pairs of edges intersect, the function governing the overlap
    function will remain the same quadratic polynomial -- the function
    changes form whenever a vertex is being swept over.
    
    Now, consider an edge $e$ of $\bodyX$ and an edge $e'$ of
    $\bodyY$.  All the translations $\trans$ for which $e$ intersects
    $e'+\trans$ map out a parallelogram $\pi( e, e')$, which has edges
    parallel to $e$ and $e'$.  The edges of these parallelograms
    correspond to translations, where some vertex of $\bodyY$ lies on
    an edge of $\bodyX$, or vice versa, and the vertices correspond to
    those translations where some vertex of $\bodyX$ coincides with
    some vertex of $\bodyY$.
    
    So, consider the arrangement $\Arr$ of all lines passing
    through the edges of these parallelograms -- it is defined by
    $O\pth{\cardin{\bodyX} \cardin{\bodyY} }$ lines, and as such has
    complexity $O\pth{\cardin{\bodyX}^2 \cardin{\bodyY}^2}$
    overall. By convexity, the boundary $\gamma = \partial \bodyC$ 
    intersects every line at most twice, and hence the curve
    $\gamma$ visits at most $O\pth{\cardin{\bodyX} \cardin{\bodyY}}$
    faces of this arrangement (we count a visit to the same face with
    multiplicity). Thus, $\gamma$ can be broken into
    $O\pth{\cardin{\bodyX} \cardin{\bodyY}}$ arcs, each one of them of
    constant descriptive complexity (i.e., its the boundary of a slice
    of a quadratic function, between intersection points of this curve
    with two lines).
    
    As for the algorithm, set $\trans = \pnt$, and compute the overlap
    function value at $\pnt$ by sweeping the two polygons, computing
    their intersection polygon, and then computing the vertical
    decomposition of this intersection polygon. It is easy to verify,
    that the overlap function is a quadratic function of $\trans$, and
    its the sum of the areas of the vertical trapezoids, each one of
    them can be decomposed into two triangles, and the coordinates of
    the vertices of each triangle are affine function of $\trans$. As
    such, the area of each triangle is a quadratic function, and
    adding them up give us the formula for the area of the whole
    overlap polygon. Now, as $\trans$ moves, every time the vertical
    decomposition changes as a vertex of one polygon is swept over by
    the boundary of the other polygon, this corresponds to a local
    change in the vertical decomposition of the overlap polygon, and
    the overlap function can be changed in constant time. Thus,
    starting from $\pnt$, we move $\trans$ up, updating the overlap
    function as necessary till we reach a point, where the overlap
    function has value $\alpha$. At that point, we trace out the outer
    zone of $\partial \bodyC$ -- since this zone is defined by $m =
    O\pth{ \cardin{\bodyX}\cardin{\bodyY}}$ rays, this zone has
    complexity $O(m)$ (note, that the exact bound on the worst case
    complexity of the faces of the zone inside a convex region is
    still open, and it is between $O(m)$ and $O(m \alpha(m))$, where
    $\alpha(n)$ is the inverse Ackerman function). Specifically, we
    are walking on the arrangement of $\Arr$ starting on a point on a
    line that is on $\partial\bodyC$, go clockwise on the edges of the
    face till we encounter $\partial\bodyC$ again, move to the next
    face at this point, and repeat. To perform this, we need an
    efficient data-structure for doing a walk in a planar arrangement
    \cite{h-twpa-00} -- in this case since this is an arrangement of
    lines, we can use a data-structure for dynamic maintenance of
    convex-hull. In particular, Brodal and Jacob presented a
    data-structure \cite{bj-dpch-02} that supports this kind of
    operations in $O( \log m )$ amortized time per update. As such,
    computing $\partial \bodyC$ takes $O(m \log m)$ time overall.
\end{proof:in:appendix}

\subsection{The shape of the polygon realizing the %
   maximum area overlap}
\seclab{shape:overlap}

In the following, all the ellipses being considered are centered in
the origin. 

\begin{lemma}%
    \lemlab{ellipse:overlap}%
    Given two ellipses $\Ellipse_1$ and $\Ellipse_2$, the translation
    which maximizes their area of overlap is the one in which their
    centers are the same points.
\end{lemma}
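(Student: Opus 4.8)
The plan is to combine two facts: a central-symmetry property of the overlap function, and its unimodality along every line, which is precisely the Brunn--Minkowski consequence already invoked in the proof of \lemref{levelset}. Since both ellipses are centered at the origin, I would write $f(\trans) = \AreaOverlap{\Ellipse_1}{\trans + \Ellipse_2}$ for $\trans \in \Re^2$; the translation placing the two centers at the same point is $\trans = 0$, so the goal becomes showing that $f(\trans) \le f(0)$ for every $\trans$.

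First I would establish the symmetry $f(-\trans) = f(\trans)$. The map $x \mapsto -x$ is an isometry, hence area preserving, and because each ellipse is centrally symmetric about the origin we have $-\Ellipse_1 = \Ellipse_1$ and $-\Ellipse_2 = \Ellipse_2$. Applying $x \mapsto -x$ to the set $\Ellipse_1 \cap \pth{-\trans + \Ellipse_2}$ sends it to $\pth{-\Ellipse_1} \cap \pth{\trans - \Ellipse_2} = \Ellipse_1 \cap \pth{\trans + \Ellipse_2}$, and since area is preserved this gives $f(-\trans) = f(\trans)$. Thus $f$ is even about $\trans = 0$.

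Next I would invoke unimodality. For planar convex bodies the overlap area is unimodal along every line (Brunn--Minkowski, the same property cited in \lemref{levelset}). Fixing an arbitrary $\trans$ and restricting $f$ to the line through $0$ and $\trans$, the restriction $h(s) = f(s\trans)$ is unimodal and, by the previous step, even. An even unimodal function on $\Re$ must attain its maximum at $0$: if its peak were at some $s_0 \ne 0$, evenness would force the same value at $-s_0$, and monotonicity between $-s_0$ and $s_0$ would then force the peak value to be attained at $0$ as well. Hence $f(\trans) = h(1) \le h(0) = f(0)$, and since $\trans$ was arbitrary, $f$ is maximized when the centers coincide.

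A cleaner and more robust variant, which I would probably use, is to appeal directly to the stronger Brunn--Minkowski fact that $\sqrt{f}$ is concave on its support (from $\lambda\,(\Ellipse_1 \cap (\trans_1+\Ellipse_2)) + (1-\lambda)\,(\Ellipse_1 \cap (\trans_2+\Ellipse_2)) \subseteq \Ellipse_1 \cap (\trans+\Ellipse_2)$ for $\trans = \lambda\trans_1 + (1-\lambda)\trans_2$). Then concavity together with evenness yields at once $\sqrt{f(0)} = \sqrt{f\pth{\tfrac12 \trans + \tfrac12(-\trans)}} \ge \tfrac12\sqrt{f(\trans)} + \tfrac12\sqrt{f(-\trans)} = \sqrt{f(\trans)}$, bypassing the unimodality bookkeeping entirely. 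In either route the only genuine content is the symmetry observation; the main point to be careful about is confirming that the cited unimodality/concavity, stated there for convex polygons, applies to ellipses as well, which it does since it rests on Brunn--Minkowski for arbitrary planar convex bodies. (If uniqueness of the optimal translation were also wanted, strict concavity for the strictly convex ellipses would give it.)
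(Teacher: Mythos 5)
Your proposal is correct and follows essentially the same route as the paper: translate both ellipses to the origin, use central symmetry ($\Ellipse_i = -\Ellipse_i$) to show the overlap function is even, and then invoke unimodality along each line through the origin to conclude the maximum is attained at $0$. The concavity-of-$\sqrt{f}$ variant you sketch is a fine alternative, but the core argument matches the paper's.
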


\begin{proof}
    Translate $\Ellipse_1$ and $\Ellipse_2$ such that their centers
    are at the origin. Consider any unit vector $\vecA$, translate
    $\Ellipse_2$ along the direction of $\vecA$, and consider the
    behavior of the overlap function $f(x) = \AreaOverlap{\Ellipse_1}
    {\Ellipse_2 + x\vecA \MakeBig }$, where $x$ varies from $-\infty$
    to $+\infty$. The function $f$ is unimodal \cite{bcdkt-cmotc-98}.
    By symmetry, we have
    \begin{align*}
        f(x)%
        = %
        \AreaOverlap{ \Ellipse_1}{ \Ellipse_2 + x\vecA\MakeBig}%
        =%
        \AreaOverlap{ -\Ellipse_1}{\, -\pth{\Ellipse_2 +
              x\vecA}\MakeBig}%
        =%
        \AreaOverlap{\Ellipse_1}{\Ellipse_2 - x\vecA \MakeBig}%
        =%
        f(-x),
    \end{align*}
    as $\Ellipse_i = - \Ellipse_i$.  If the maximum is attained at
    $x\neq 0$, we will get another maximum at $-x$, which implies, as
    $f$ unimodal, that $f(0) = f(x) = f(-x)$, as desired.
        \InESAVer{ \qed}
\end{proof}

\begin{figure}
    \centerline{%
       \begin{minipage}[b]{0.48\linewidth}
           \centering
           \includegraphics[width=0.98\linewidth]{\si{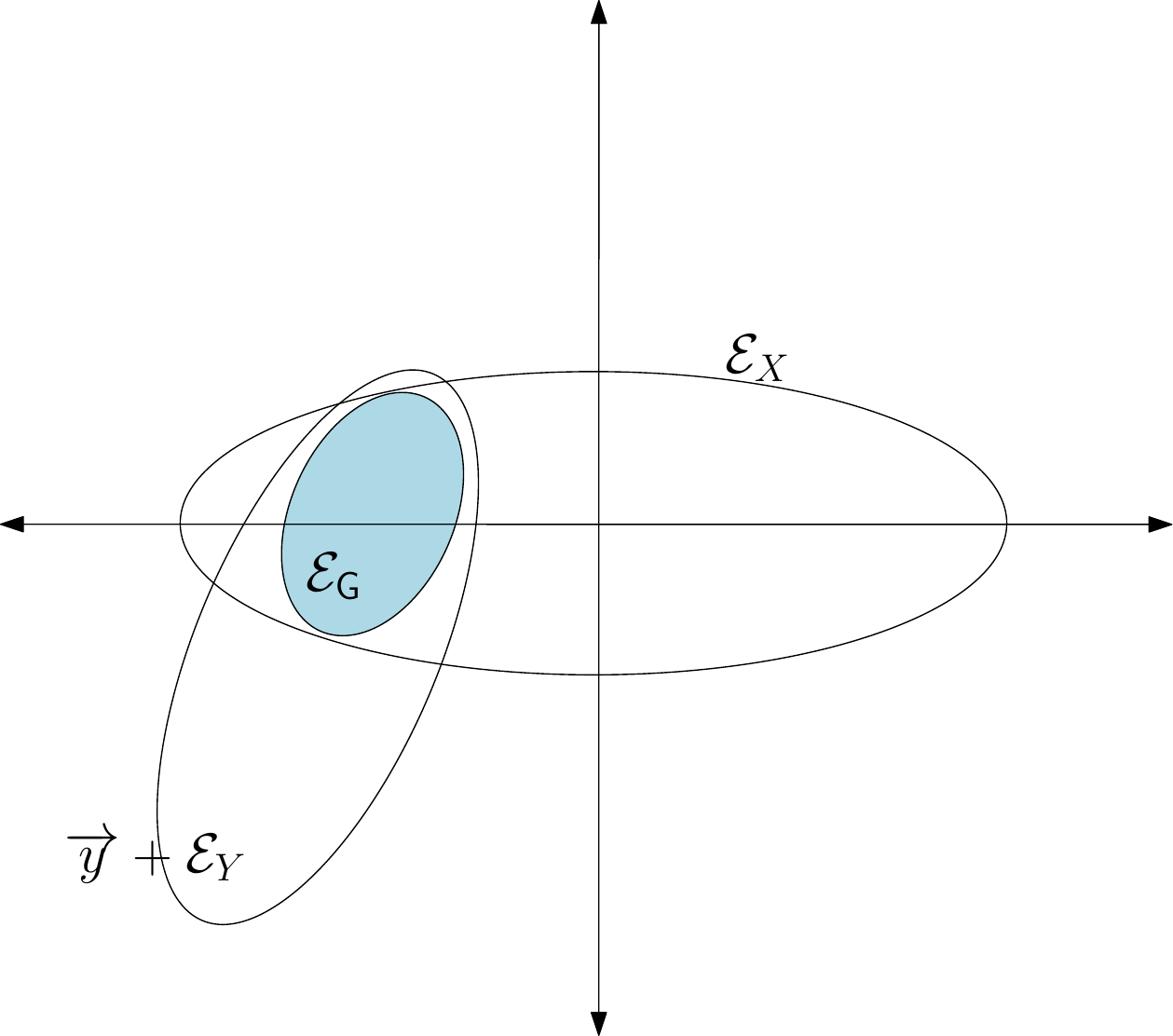}}
           \caption{}
           \figlab{ellipse:proof:1}
       \end{minipage}
       \begin{minipage}[b]{0.48\linewidth}
           \centering
           \includegraphics[width=0.98\linewidth]{\si{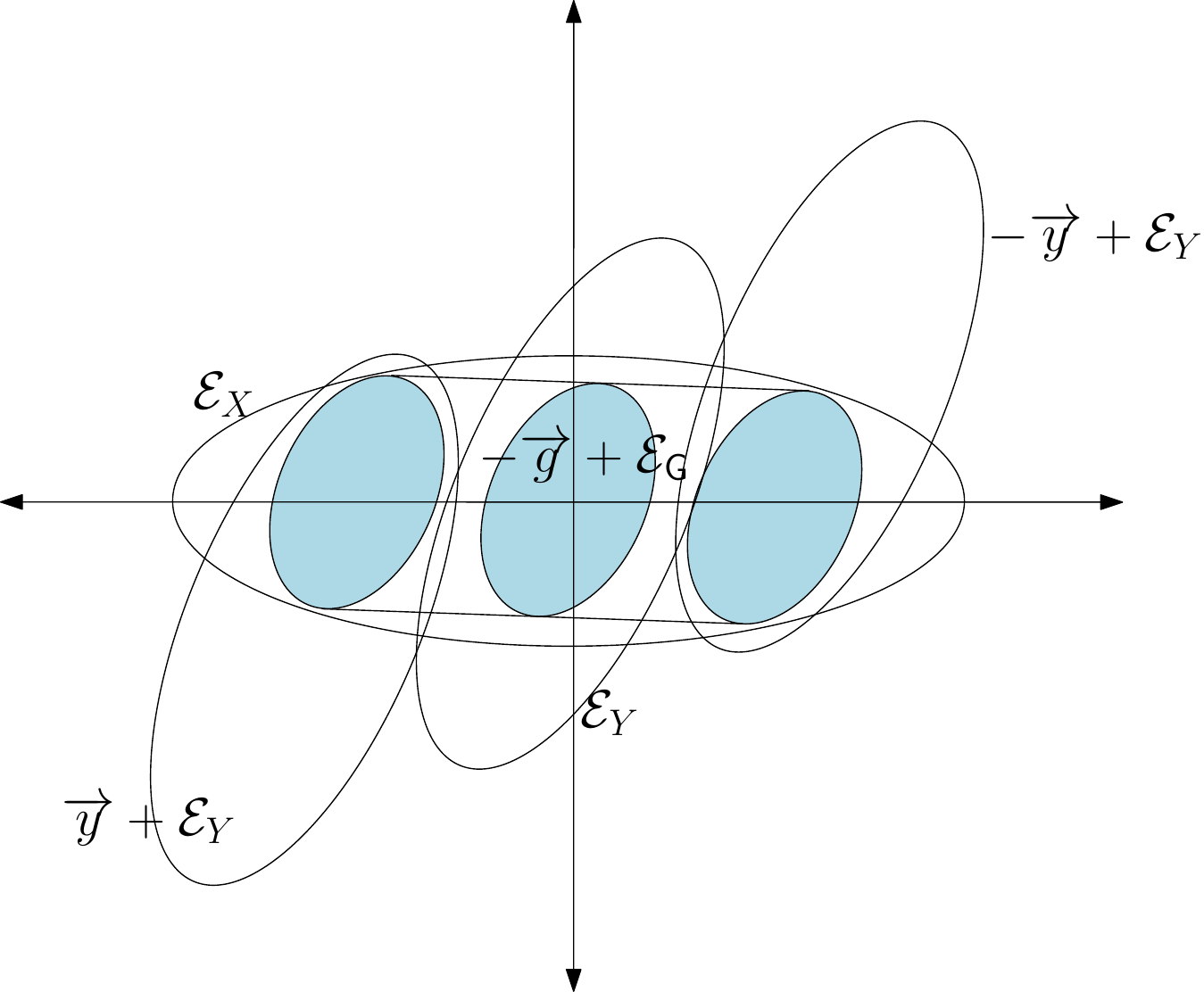}}
           \caption{}
           \figlab{ellipse:proof:2}
       \end{minipage}%
    }
\end{figure}

\begin{lemma}
    \lemlab{ellipse:overlap:center}%
    Consider two ellipses $\elpsX$ and $\elpsY$ in the plane, and
    consider any two vectors $\vecX$ and $\vecY$, then there is a
    vector $\vecA$ such that $\vecA + \pth[]{\vecX + \elpsX} \cap
    \pth[]{\vecY +\elpsY} \subseteq 2\elpsX \cap 2\elpsY$.
\end{lemma}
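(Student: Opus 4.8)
The plan is to exhibit an explicit translation $\vecA$ and verify the containment directly, using only that an ellipse centered at the origin is centrally symmetric (so $\elpsX = -\elpsX$ and $\elpsY = -\elpsY$, as stipulated for all ellipses here) and convex (so for the Minkowski sum one has $\elpsX + \elpsX = 2\elpsX$, and likewise for $\elpsY$). I expect that the unimodality statement \lemref{ellipse:overlap} is not actually needed for this lemma: the assertion is really a fact about centrally symmetric convex bodies, and trying to route it through the max-overlap configuration would be a detour.

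First I would dispose of the degenerate case: if $\pth[]{\vecX + \elpsX} \cap \pth[]{\vecY + \elpsY}$ is empty, then any $\vecA$ works, since the empty set is contained in everything. So I assume the intersection is nonempty and fix an arbitrary \emph{anchor} point $p$ in it. The key idea is to translate the whole intersection so that this anchor is carried to the origin, i.e.\ to set $\vecA = -p$; the point $p$ belonging to both translated ellipses is what will let a single translation serve both containments.

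Next I would verify the two containments separately. Take any $q \in \pth[]{\vecX + \elpsX} \cap \pth[]{\vecY +\elpsY}$; the goal is to show $\vecA + q = q - p \in 2\elpsX \cap 2\elpsY$. For the $\elpsX$ side, write $q - p = (q - \vecX) + (\vecX - p)$. Since $q, p \in \vecX + \elpsX$, both $q - \vecX$ and $p - \vecX$ lie in $\elpsX$; central symmetry turns $p - \vecX \in \elpsX$ into $\vecX - p \in \elpsX$, and then convexity gives $q - p \in \elpsX + \elpsX = 2\elpsX$. The identical computation with $\vecY$ and $\elpsY$ in place of $\vecX$ and $\elpsX$ yields $q - p \in 2\elpsY$. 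Hence $\vecA + q \in 2\elpsX \cap 2\elpsY$, which is exactly the claim.

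The only point requiring care — the ``hard part'', such as it is — is that a single $\vecA$ must simultaneously place the intersection inside \emph{both} $2\elpsX$ and $2\elpsY$. The anchor choice $\vecA = -p$ resolves this because the common point $p$ acts as a symmetry reference for each ellipse at once, and the factor $2$ is precisely the slack that makes it go through: each summand $\vecX - p$ (respectively $\vecY - p$) is itself an element of $\elpsX$ (respectively $\elpsY$), so adding it to a point of the corresponding ellipse lands in its doubled copy. No metric estimates or eccentricity bounds are needed.
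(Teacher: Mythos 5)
Your proof is correct, but it takes a genuinely different (and more elementary) route than the paper's. The paper proves the lemma by inscribing the John ellipse $\elpsG$ into the intersection $\bodyG = \elpsX \cap (\vecY + \elpsY)$, arguing via central symmetry and convexity that $\elpsG \subseteq \elpsX \cap \elpsY$, and then using the outer John bound $\bodyG \subseteq \vecG + 2\elpsG$ to conclude. You instead run the classical difference-body argument: for any anchor $p$ in the intersection, $q - p$ is a sum of two elements of $\elpsX$ (namely $q - \vecX$ and $\vecX - p$, the latter in $\elpsX$ by central symmetry), hence lies in $\elpsX + \elpsX = 2\elpsX$, and likewise for $\elpsY$; so $\vecA = -p$ works. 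Both arguments are sound, and your steps (the reduction $C + C = 2C$ for convex $C$, and $\elpsX=-\elpsX$ since the ellipses are origin-centered) are exactly the facts needed. What your version buys: it avoids John's theorem entirely, it produces an explicit translation (any point of the intersection), and it holds verbatim for arbitrary centrally symmetric convex bodies in any dimension with the same constant $2$ --- whereas the John-ellipse route would degrade to a factor $d$ outside the plane. What the paper's version buys is mostly stylistic continuity: the John ellipse of an intersection of ellipses reappears as $\elpsB$ in \lemref{inclusion}, so the paper reuses one hammer; but nothing in the downstream application depends on how the present lemma is proved, so your argument would substitute cleanly.
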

\begin{proof}
    For the sake of simplicity of exposition, assume that $\vecX =
    0$. Now, consider the intersection $\bodyG = \elpsX \cap
    \pth[]{\vecY + \elpsY}$, and let $\elpsG$ be the largest area
    ellipse contained inside $\bodyG$. John's theorem implies that
    there is a translation vector $\vecG$, such that $\vecG + \elpsG
    \subseteq \bodyG \subseteq \vecG + 2\elpsG$, see
    \figref{ellipse:proof:1}.
    
    Observe that $ \vecG + \elpsG \subseteq \elpsX$, and by the
    symmetry of $\elpsG$ and $\elpsX$, we have that $-\vecG + \elpsG =
    -\vecG - \elpsG \subseteq -\elpsX = \elpsX$. This by convexity
    implies that $\elpsG \subseteq \elpsX$. A similar argument implies
    that $\elpsG \subseteq \elpsY$. As such, $\elpsG \subseteq \elpsX
    \cap \elpsY$.
    
    Thus, we have that $\bodyG \subseteq \vecG + 2\elpsG \subseteq
    \vecG + 2\elpsX \cap 2\elpsY$, as desired.
        \InESAVer{ \qed}
\end{proof}

\begin{lemma}[{{\cite[Lemma 22.5]{h-gaa-11}}}]%
    \lemlab{big:inner:ball}%
    Any convex set $K \subseteq \Re^d$ contained in a unit square,
    contains a ball of radius $\areaX{K} / 8$
\end{lemma}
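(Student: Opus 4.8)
The plan is to reduce everything to the planar setting, since a unit square lives in $\Re^2$ and the statement is really about a convex $K\subseteq\Re^2$ sitting inside a unit square. The key idea is to route through the width of $K$: I will first lower bound $\widthX{K}$ in terms of $\areaX{K}$, and then convert the width bound into an inscribed-disk bound using \lemref{width:to:ball}, which already guarantees an inner disk of radius at least $\widthX{K}/2\sqrt{3}$.

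First I would bound the area from above by the width. Let $w=\widthX{K}$, and take the two parallel supporting lines of $K$ realizing this minimum width, so that $K$ lies in the strip bounded by them, of thickness $w$. In the orthonormal frame aligned with this strip (one axis across the strip, one along it), $K$ is contained in an axis-parallel rectangle whose short side is $w$ and whose long side equals the extent of $K$ along the strip. The crucial observation is that this extent is at most $\sqrt{2}$: $K$ is contained in a unit square, and the orthogonal projection of a unit square onto any line has length at most its diagonal $\sqrt{2}$. Hence $K$ fits in a rectangle of dimensions at most $w\times\sqrt{2}$, giving $\areaX{K}\le \sqrt{2}\,w$, i.e. $w\ge \areaX{K}/\sqrt{2}$.

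I would then combine this with \lemref{width:to:ball}. The largest disk inscribed in $K$ has radius
\[
r \;\ge\; \frac{\widthX{K}}{2\sqrt{3}} \;\ge\; \frac{\areaX{K}}{2\sqrt{6}} \;\ge\; \frac{\areaX{K}}{8},
\]
where the last step holds because $2\sqrt{6}\approx 4.9<8$. This yields the claimed inscribed ball of radius $\areaX{K}/8$, with a comfortable constant-factor slack.

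I do not expect a serious obstacle here; the one point that needs care is the orthogonal-extent bound, namely justifying that the side of the enclosing rectangle along the width strip is at most $\sqrt{2}$. This must be argued via projections of the unit square rather than of $K$ directly, since the width direction of $K$ is in general tilted relative to the square; once that projection bound is in hand, the rest is the elementary area-$\le$-width$\times$length inequality together with the cited width-to-inner-radius estimate.
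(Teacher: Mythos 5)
Your proposal is correct. Note that the paper itself offers no proof of this statement---it is imported verbatim as Lemma~22.5 of \cite{h-gaa-11}---so there is no internal argument to compare against; what you have written is a self-contained substitute that leans only on \lemref{width:to:ball} plus an elementary area estimate. The three steps all check out: the extent of $K$ in the direction along its width strip is at most the extent of the enclosing unit square in that direction, which is at most the diagonal $\sqrt{2}$; hence $\areaX{K} \leq \sqrt{2}\,\widthX{K}$; and then \lemref{width:to:ball} gives an inscribed disk of radius at least $\widthX{K}/(2\sqrt{3}) \geq \areaX{K}/(2\sqrt{6}) > \areaX{K}/8$, since $2\sqrt{6} \approx 4.9$. (The cited source argues instead via the diameter and an inscribed triangle, but your width-based route is if anything cleaner in this context, since \lemref{width:to:ball} is already stated in the paper, and it yields a slightly better constant than the $1/8$ claimed.) The one cosmetic caveat is that the lemma is phrased for $K \subseteq \Re^d$; as you observe, containment in a unit \emph{square} forces the relevant affine hull to be at most two-dimensional, so the reduction to the planar case is legitimate, with the degenerate case $\areaX{K}=0$ being vacuous.
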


The following lemma is one of our key insights -- the maximum area of
intersection of two polygons contains any intersection of translated
copies of these polygons up to translation and a constant factor
scaling.

\begin{lemma}%
    \lemlab{inclusion}%
    Let $\bodyX$ and $\bodyY$ be two convex polygons, and let
    $\bodyMax$ be the polygon realizing their maximum area of
    intersection under translation. Let $\vecA$ be any vector in the
    plane, and consider the polygon $\bodyD = \bodyX \cap \pth[]{\vecA
       + \bodyY}$, then there exists a vector $\vecB$ such that,
    $\vecB + \bodyD \subseteq c_0\bodyMax$, for some fixed constant
    $c_0$.
\end{lemma}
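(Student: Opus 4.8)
The plan is to pass from the polygons to their John ellipses, transfer the statement to ellipses (where \lemref{ellipse:overlap} and \lemref{ellipse:overlap:center} do all the work), and then pull the conclusion back by an affine map. First I would normalize: translating $\bodyX$ and $\bodyY$ independently only translates $\bodyMax$ and $\bodyD$, which is harmless since the claim is only up to a translation $\vecB$. So I may assume the John ellipses $\elpsX$ of $\bodyX$ and $\elpsY$ of $\bodyY$ are centered at the origin, giving $\elpsX \subseteq \bodyX \subseteq 2\elpsX$ and $\elpsY \subseteq \bodyY \subseteq 2\elpsY$ (John's theorem, with factor $2$ in the plane). I then set $\bodyG = \elpsX \cap \elpsY$, a convex body symmetric about the origin, which by \lemref{ellipse:overlap} is exactly the maximum-overlap region of the two ellipses.

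The next step records three facts about $\bodyG$. Since $\bodyD = \bodyX \cap (\vecA + \bodyY) \subseteq 2\elpsX \cap (\vecA + 2\elpsY)$, \lemref{ellipse:overlap:center} applied to the ellipses $2\elpsX, 2\elpsY$ produces a vector $\vecB_1$ with $\vecB_1 + \bodyD \subseteq 4\elpsX \cap 4\elpsY = 4\bodyG$, the last equality holding because both ellipses are centered at the origin. The identical argument applied to $\bodyMax = \bodyX \cap (\trans + \bodyY)$ (for the optimal $\trans$) gives a vector $\vecB_2$ with $\vecB_2 + \bodyMax \subseteq 4\bodyG$. Finally, evaluating the overlap at $\trans = 0$ yields $\bodyX \cap \bodyY \supseteq \elpsX \cap \elpsY = \bodyG$, hence $\areaX{\bodyMax} \geq \areaX{\bodyG}$.

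The heart of the proof is to upgrade these into a reverse containment $\bodyG \subseteq c_1 \bodyMax$ up to translation; combined with $\vecB_1 + \bodyD \subseteq 4\bodyG$ this finishes the lemma with $c_0 = 4c_1$. To obtain it, let $\elpsG$ be the John ellipse of $\bodyG$, which by symmetry is centered at the origin and satisfies $\elpsG \subseteq \bodyG \subseteq \sqrt{2}\,\elpsG$, and let $T$ be the affine map carrying $\elpsG$ to the unit disk. Since $\vecB_2 + \bodyMax \subseteq 4\bodyG \subseteq 4\sqrt{2}\,\elpsG$, the image $T(\vecB_2 + \bodyMax)$ lies inside a disk of radius $4\sqrt{2}$, and by \obsref{transformation} its area is at least $\areaX{T(\elpsG)}$ (as $\areaX{\bodyMax} \geq \areaX{\bodyG} \geq \areaX{\elpsG}$). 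Enclosing this disk in a square and applying \lemref{big:inner:ball}, the convex set $T(\vecB_2 + \bodyMax)$ contains a ball of some absolute constant radius $r_0 > 0$. Consequently the unit disk $T(\elpsG)$ is contained in $1/r_0$ times a translate of $T(\vecB_2 + \bodyMax)$, and hence so is $T(\bodyG)$ up to the extra factor $\sqrt{2}$. Pulling everything back through $T^{-1}$ — and using that the relation ``$A$ is contained in $c$ times a translate of $B$'' is affine-invariant — yields $\bodyG \subseteq c_1 \bodyMax$ up to translation.

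The main obstacle is exactly this last step: turning the two ``soft'' facts (an area lower bound, and an outer containment in a scaled copy of $\bodyG$) into a genuine inner containment. The point is that a convex body whose area is within a constant factor of its circumscribing ellipse cannot be thin, so after the affine normalization it must swallow a ball of constant radius, and \lemref{big:inner:ball} is what makes this quantitative. I would take care that every scaling is performed about the origin (or the image of the scaling center under $T$), so that the affine-invariance of the ``scaled translate'' relation can be invoked without bookkeeping errors.
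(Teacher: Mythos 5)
Your proposal is correct and follows essentially the same route as the paper's proof: John's theorem on both polygons, \lemref{ellipse:overlap:center} to trap both $\bodyD$ and $\bodyMax$ in a constant scaling of $\elpsX \cap \elpsY$, the observation that $\elpsX \cap \elpsY$ fits inside some translate-intersection of $\bodyX$ and $\bodyY$ to lower-bound $\areaX{\bodyMax}$, and an affine normalization plus \lemref{big:inner:ball} to extract a constant-radius inner ball of $\bodyMax$. The only (immaterial) difference is that the paper normalizes the John ellipse of $\elpsX\cap\elpsY$ to a disk of radius $1/16$ so that everything lands directly inside a unit square, whereas you normalize to the unit disk and must rescale once more before invoking \lemref{big:inner:ball}.
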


\begin{proof}
    Let $\elpsX$ (resp., $\elpsY$) denote the maximum area ellipse
    (centered at the origin) contained inside $\bodyX$
    (resp. $\bodyY$). By John's Theorem, we have $\vecX + \elpsX
    \subseteq \bodyX \subseteq \vecX + 2 \elpsX$ and $\vecY + \elpsY
    \subseteq \bodyY \subseteq \vecY + 2 \elpsY$, where $\vecX, \vecY$
    are some vector.  Let $\bodyB = \elpsX \cap \elpsY$, and let
    $\elpsB$ be the maximum area ellipse contained inside $\bodyB$.
    Observe that $\bodyB$ is symmetric and centered at the origin, and
    by John's theorem $\elpsB \subseteq \bodyB \subseteq2\elpsB$.
    
    By \lemref{ellipse:overlap:center}, there are vectors $\vecC$ and
    $\vecD$, such that
    \begin{align*}
        \bodyD%
        &=%
        \bodyX \cap \pth[]{ \vecA + \bodyY}%
        \subseteq%
        \pth{ \vecX + 2\elpsX} \cap \pth[]{\vecC + \vecY + 2\elpsY}%
        \subseteq%
        \vecD + 4\elpsX \cap 4\elpsY%
        =%
        \vecD + 4\bodyB\\%
        &\subseteq %
        \vecD + 8\elpsB.
    \end{align*}
    Applying a similar argument, we have that $\bodyMax \subseteq
    \vecMax + 8\elpsB$, for some vector $\vecMax$.
    
    Apply the linear transformation that maps $\elpsB$ to
    $\diskX{1/16}$, where $\diskX{r}$ denotes the disk of radius $r$
    centered at the origin. By \obsref{transformation}, we can
    continue our discussion in the transformed coordinates. This
    implies that $\bodyMax - \vecMax \subseteq \diskX{1/2}$ (which is
    contained inside a unit square). By \lemref{big:inner:ball}, there
    is a vector $\vecXA$, such that $\vecXA +
    \diskX{\areaX{\bodyMax}/8} \subseteq \bodyMax$.
    
    Observe that $\bodyB = \elpsX \cap \elpsY \subseteq \pth[]{- \vecX
       + \bodyX} \cap \pth[]{- \vecY + \bodyY}$. As such, the area of
    $\bodyB$ must be smaller than the area of $\bodyMax$ (by the
    definition of $\bodyMax$).  We thus have
    \begin{math}
        \areaX{\bodyMax}%
        \geq%
        \areaX{\bodyB}%
        \geq%
        \areaX{\elpsB}%
        =%
        \areaX{\diskX{1/16}}%
    \end{math}
    which is a constant bounded away from zero.
    Therefore, 
    \begin{align*}
	\bodyD%
        &\subseteq%
        \vecD + 8\elpsB%
        =%
        \vecD + \diskX{\frac{1}{2}}%
        =%
        \vecD + \frac{4}{\areaX{\bodyMax}} \cdot
        \diskX{\frac{\areaX{\bodyMax}}{8} }%
        \\
        &\subseteq%
        \vecD + \frac{4}{\areaX{\bodyMax}} \pth{ \bodyMax - \vecXA},%
    \end{align*}
    which implies the claim.
        \InESAVer{ \qed}
\end{proof}

\subsubsection{Constant approximation to the maximum overlap}%

\begin{lemma}[\constApproxByRect]%
    \lemlab{constant:approx}%
    Let $\bodyX$ and $\bodyY$ be two convex polygons, and let
    $\bodyMax$ be the polygon realizing their maximum area
    intersection under translation. Then, one can compute, in $O\pth{
       \cardin{\bodyX} + \cardin{\bodyY}}$ time, a rectangle $\rectA$,
    such that $\rectA \subseteq \vecA + \bodyMax \subseteq \constRect
    \rectA$, where $\constRect$ is a constant. That is, one can
    compute a constant factor approximation to the maximum area
    overlap in linear time.
    
    Furthermore, for any translation $\transY$, we have that $\bodyX
    \cap \pth[]{\bodyY + \transY } \sqsubseteq \constRect \rectA $.
\end{lemma}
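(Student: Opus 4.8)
The plan is to avoid computing $\bodyMax$ (which is exactly the hard problem) and instead to extract the rectangle from a single, cheaply computable overlap that is provably a constant-factor proxy for $\bodyMax$ in the strong containment sense. Concretely, I would first invoke \lemref{fast:convex:approx} on each of $\bodyX$ and $\bodyY$ to obtain, in linear time, origin-centered rectangles $\rectX,\rectY$ and points $\vecX,\vecY$ with $\vecX+\rectX\subseteq\bodyX\subseteq\vecX+5\rectX$ and $\vecY+\rectY\subseteq\bodyY\subseteq\vecY+5\rectY$. I then align the two inner rectangles by setting $\trans_0=\vecX-\vecY$ and forming the overlap $\bodyD=\bodyX\cap(\trans_0+\bodyY)$; intersecting two convex polygons is linear, so this step is linear. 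A final application of \lemref{fast:convex:approx} to $\bodyD$ yields an origin-centered rectangle $\rectA$ and a point $\vecC$ with $\vecC+\rectA\subseteq\bodyD\subseteq\vecC+5\rectA$. The output rectangle is a rescaled copy of $\rectA$, and the whole computation is linear time.

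It then remains to show that $\bodyD$ approximates $\bodyMax$ up to a constant, which splits into two one-sided containments. For $\rectA\sqsubseteq\bodyMax$: since $\vecC+\rectA\subseteq\bodyD=\bodyX\cap(\trans_0+\bodyY)$ is itself an overlap, \lemref{inclusion} gives $\bodyD\sqsubseteq c_0\bodyMax$, and as $\rectA\sqsubseteq\bodyD$ we get $\rectA\sqsubseteq c_0\bodyMax$; after rescaling the output rectangle by $1/c_0$ we may assume $\rectA\sqsubseteq\bodyMax$. For $\bodyMax\sqsubseteq O(1)\rectA$: the alignment $\trans_0=\vecX-\vecY$ guarantees $\bodyD\supseteq(\vecX+\rectX)\cap(\vecX+\rectY)=\vecX+(\rectX\cap\rectY)$, so it suffices to bound $\bodyMax$ against $\rectX\cap\rectY$. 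Writing $\bodyMax\subseteq(\vecX+5\rectX)\cap(\trans_{\max}+\vecY+5\rectY)$ for the optimal translation $\trans_{\max}$, the right-hand side lies in $\vecX+5\rectX$ and in $\trans_{\max}+\vecY+5\rectY$, so its extent in each axis direction of $\rectX$ (resp.\ $\rectY$) is at most that of $5\rectX$ (resp.\ $5\rectY$). Translating any one of its points to the origin, convexity forces the whole set into $10(\rectX\cap\rectY)$, and hence $\bodyMax\sqsubseteq 10(\rectX\cap\rectY)\sqsubseteq 10\,\bodyD\subseteq O(1)\rectA$.

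Finally, I would convert the two one-sided containments into the single sandwiching demanded by the statement. Since $\rectA$ is centrally symmetric, if $\vec p+\rectA\subseteq\bodyMax$ and $\vec q+\bodyMax\subseteq\lambda\rectA$, then combining these gives $\vec p+\vec q+\rectA\subseteq\lambda\rectA$, which (an erosion computation via support functions) yields $\vec p+\vec q\in(\lambda-1)\rectA$; placing $\bodyMax$ via $\vecA=-\vec p$ then gives $\rectA\subseteq\vecA+\bodyMax$ and, by a Minkowski-sum bound, $\vecA+\bodyMax\subseteq(2\lambda-1)\rectA$, establishing the claimed $\constRect$. The ``furthermore'' part is then immediate: for any $\transY$, \lemref{inclusion} gives $\bodyX\cap(\bodyY+\transY)\sqsubseteq c_0\bodyMax$, and composing with $\bodyMax\sqsubseteq O(1)\rectA$ yields $\bodyX\cap(\bodyY+\transY)\sqsubseteq\constRect\rectA$ after enlarging $\constRect$ to absorb the extra constant.

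I expect the main obstacle to be the direction $\bodyMax\sqsubseteq O(1)\rectA$: a priori the optimal overlap could sit in a configuration far from the center-aligned $\bodyD$, and because $\rectX$ and $\rectY$ generally have different orientations one cannot simultaneously center all four slab directions with a single translation. The resolution is to bound the extents of the outer intersection in the two rectangles' axis directions and then use a single base point together with convexity, rather than attempting to center; this is the crux, and it is also what lets the whole argument stay purely affine and avoid re-deriving the John-ellipse machinery of \lemref{inclusion}.
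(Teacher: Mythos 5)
Your proposal is correct, and it takes a genuinely different route from the paper's. The paper's proof is a one-paragraph ``rerun'' of the proof of \lemref{inclusion} with the John ellipses replaced by the rectangles of \lemref{fast:convex:approx}: it outputs a scaled copy of a rectangle $\rectM$ sandwiching $\rectX \cap \rectY$, and the two containments are obtained by repeating the symmetric-body argument of \lemref{ellipse:overlap:center} (for the upper bound) and the area comparison $\areaX{\rectX\cap\rectY}\leq\areaX{\bodyMax}$ together with \lemref{big:inner:ball} (for the lower bound), all in normalized coordinates. You instead compute the explicit overlap $\bodyD=\bodyX\cap(\trans_0+\bodyY)$ at the translation aligning the two inner rectangles, approximate \emph{that} by a rectangle, invoke \lemref{inclusion} as a black box to get $\rectA\sqsubseteq c_0\bodyMax$ (legitimate, since \lemref{inclusion} precedes this lemma and does not depend on it), and replace the symmetric-body/John machinery for the upper bound by the elementary slab argument: $\bodyMax$ sits in $\vecX+5\rectX$ and in a translate of $5\rectY$, so its extent in each of the four slab directions is bounded, and anchoring one of its points at the origin puts it inside $10\rectX\cap10\rectY=10(\rectX\cap\rectY)\sqsubseteq10\,\bodyD$. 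Your extent computation, the support-function step converting the two one-sided containments $\vec p+\rectA\subseteq\bodyMax$ and $\vec q+\bodyMax\subseteq\lambda\rectA$ into the single sandwich with constant $2\lambda-1$, and the ``furthermore'' part via transitivity of $\sqsubseteq$ all check out, and every step is linear time (convex polygon intersection included). What the paper's version buys is brevity by reusing an argument already written out; what yours buys is that the upper containment becomes purely affine and coordinate-free, avoiding the normalization and the inner-ball lemma, at the cost of one extra polygon intersection and a slightly more careful bookkeeping of the two translation vectors at the end.
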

\begin{proof}
    We are going to implement the algorithmic proof of
    \lemref{inclusion}.  Instead of John's ellipsoid we use the
    rectangle of \lemref{fast:convex:approx}. Clearly, the proof of
    \lemref{inclusion} goes through with the constants being somewhat
    worse. Specifically, we compute, in linear time, vectors $\vecX,
    \vecY$, and rectangles $\rectX, \rectY$, such that $\vecX + \rectX
    \subseteq \bodyX \subseteq \vecX + 5 \rectX$ and $\vecY + \rectY
    \subseteq \bodyY \subseteq \vecY + 5 \rectY$. Again, compute a
    rectangle $\rectM$, such that $\rectM/5 \subseteq \rectX \cap
    \rectY \subseteq \rectM$. Arguing as in \lemref{inclusion}, and
    setting $\rectA = \rectM / \constC$, for some constant $\constC$,
    is the desired rectangle.
    \InESAVer{ \qed}
\end{proof}


\section{Approximating the overlap %
   function of convex polygons}
\seclab{approx:overlap:convex}

\begin{defn}
    Given two convex polygons $\bodyX$ and $\bodyY$ in the plane, of
    total complexity $n$, and parameters $\eps \in (0,1)$, $
    \numPolygons$, $\totalN$, a function $\AOA{\trans}$ is
    \emphi{$(\eps, \numPolygons, \totalN)$-approximation} of
    $\AreaOverlap{\bodyX}{ \trans + \bodyY }$, if the following
    conditions hold:
    \begin{compactenum}[\qquad(A)]
        \item $\forall \trans \in \Re^2$, we have
        \begin{math}
            \cardin{\AreaOverlap{\bodyX}{ \trans + \bodyY } -
               \AOA{\trans }} \leq \eps \maxOverlap{\bodyX}{\bodyY}.
        \end{math}
        
        \item There are convex polygons $\Poly_1, \ldots,
        \Poly_\numPolygons$, each of maximum complexity $\totalN$, such
        that inside every face of the arrangement $\ArrC = \ArrX{
        \Poly_1, \ldots, \Poly_\numPolygons}$, the approximation
        function $\AOA{\trans}$ is the same quadratic function.
    \end{compactenum}
    That is, the total descriptive complexity of $\AOA{\cdot}$ is the
    complexity of the arrangement $\ArrC$.
\end{defn}

\begin{algorithm}%
    \alglab{preprocessing}%
    The input is two convex polygons $\bodyX$ and $\bodyY$ in the
    plane, of total complexity $n$, and a parameter $\eps \in (0,1)$.
    As a first step, the algorithm is going to approximate $\bodyX$
    and $\bodyY$ as follows:
    \begin{compactenum}[\qquad(A)]
        \item $\rectM \leftarrow \constApproxByRect\pth{\bodyX,
           \bodyY}$, see \lemref{constant:approx}.

        \item $\Trans \leftarrow $ affine transformation that maps $2
        \constRect \rectM$ to $[0,1]^2$.

        \item $\bodyXTA \leftarrow
        \approxPolyWidth\pth{\Trans\pth{\bodyX}, \, N }$ and $\bodyYTA
        \leftarrow {\approxPolyWidth\pth{\Trans\pth{\bodyY}, \, N }}$.

        See \lemref{approx:scheme}, here $N = \ceil{ \constD/ \eps}$,
        and $\constD$ is a sufficiently large constant.

        \item $\bodyXA \leftarrow \Trans^{-1}\pth{ \bodyXTA }$ and
        $\bodyYA \leftarrow \Trans^{-1}\pth{ \bodyYTA }$.
    \end{compactenum}
\end{algorithm}

\subsection{If one polygon is smaller than the other}
\seclab{small}

\parpic[r]{\includegraphics{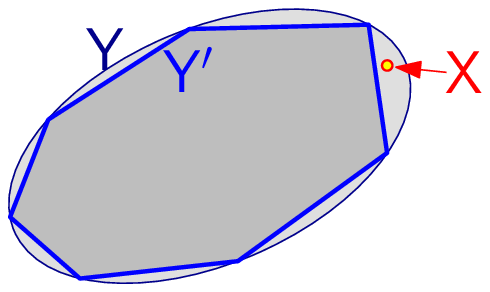}}

Assume, without loss of generality, that $\bodyX$ is smaller than
$\bodyY$, that is, $\bodyX$ can be translated so that it is entirely
contained inside $\bodyY$ (i.e., $\scaleSimX{\bodyX}{\bodyY} \leq 1$,
see \lemref{scaling}). The maximum area of overlap is now equal to
$\areaX{\bodyX}$.  The challenge is, that for any approximation of
$\bodyY$, we can always have a sufficiently small $\bodyX$ which can
be placed in $\bodyY \setminus \bodyYA$, as shown in the figure on the
right. Therefore for all those translations for which $\bodyX$ is
placed inside $\bodyY \setminus \bodyYA$, our approximation will show
zero overlap, even though the actual overlap is $\areaX{\bodyX}$.

To get around this problem, we will first approximate the smaller
polygon $\bodyX$, using our approximation scheme, to get polygon
$\bodyXA$, then we will compute level sets of the overlap function and
use them to approximate it.

\begin{lemma}
    \lemlab{bounded:width:2}%
    Given convex polygons $\bodyX$ and $\bodyY$, such that
    $\scaleSimX{\bodyX}{\bodyY} < 1$, and parameter $\eps > 0$, and
    let $\bodyXA$ be the approximation to $\bodyX$, as computed by
    \algref{preprocessing}.  Then, we have, for all translations
    $\trans \in \Re^2$, that
    \begin{math}
        \cardin{\MakeBig \AreaOverlap{\bodyXA}{\trans + \bodyY} -
           \AreaOverlap{\bodyX}{\trans + \bodyY}}%
        \leq%
        \eps \maxOverlap{\bodyX}{\bodyY}.
    \end{math}
\end{lemma}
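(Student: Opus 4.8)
The plan is to reduce the entire statement to a single uniform area estimate, $\areaX{\bodyX \setminus \bodyXA} \leq \eps\areaX{\bodyX}$, and then transport that estimate through the affine map $\Trans$ to the normalized frame, where all the relevant lengths and areas are pinned down by constants.

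First I would record two structural facts. Since $\scaleSimX{\bodyX}{\bodyY} < 1$, the polygon $\bodyX$ can be translated inside $\bodyY$, so $\maxOverlap{\bodyX}{\bodyY} = \areaX{\bodyX}$ and the maximizing overlap polygon $\bodyMax$ of \lemref{constant:approx} is just a translate of $\bodyX$. Second, \lemref{approx:scheme} (applied inside \algref{preprocessing}) guarantees $\bodyXA \subseteq \bodyX$, so overlaps can only shrink. Writing $T = \trans + \bodyY$, for every translation $\trans$ we then have
\[
0 \leq \AreaOverlap{\bodyX}{\trans+\bodyY} - \AreaOverlap{\bodyXA}{\trans+\bodyY} = \areaX{(\bodyX \setminus \bodyXA) \cap T} \leq \areaX{\bodyX \setminus \bodyXA},
\]
so the absolute value in the lemma equals this nonnegative difference and it suffices to prove the single bound $\areaX{\bodyX \setminus \bodyXA} \leq \eps\areaX{\bodyX} = \eps\maxOverlap{\bodyX}{\bodyY}$.

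The heart of the argument is this area bound, carried out in the transformed frame. By \obsref{transformation} the ratio $\areaX{\bodyX\setminus\bodyXA}/\areaX{\bodyX}$ is invariant under $\Trans$, and since $\bodyXTA = \Trans(\bodyXA)$ it equals $\areaX{\Trans(\bodyX)\setminus\bodyXTA}/\areaX{\Trans(\bodyX)}$; thus it is enough to bound the latter by $\eps$. This is where the normalization pays off: from \lemref{constant:approx} we have $\rectM \subseteq \vecA + \bodyMax \subseteq \constRect\rectM$ with $\bodyMax$ a translate of $\bodyX$, and $\Trans$ sends $2\constRect\rectM$ to the unit square; since uniform scaling commutes with an affine map about centers, the image $\Trans(\bodyX)$ is sandwiched, up to translation, between a square of side $1/(2\constRect)$ and one of side $1/2$. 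This pins down everything I need: $\areaX{\Trans(\bodyX)} \geq 1/(4\constRect^2)$, its width is at most $1/2$, and its perimeter — hence that of the nested convex polygon $\bodyXTA$, by monotonicity of perimeter under inclusion of convex sets — is at most $2$.

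With these in hand I would finish with a Steiner/collar estimate. By \lemref{approx:scheme}, $\bodyXTA \subseteq \Trans(\bodyX) \subseteq \bodyXTA \oplus \diskX{\delta}$ with $\delta = \WidthX{\Trans(\bodyX)}/N \leq 1/(2N) \leq \eps/(2\constD)$, so the planar Steiner formula yields $\areaX{\Trans(\bodyX)\setminus\bodyXTA} \leq \mathrm{perim}(\bodyXTA)\,\delta + \pi\delta^2 \leq 2\delta + \pi\delta^2 = O(\eps/\constD)$. Comparing with $\eps\,\areaX{\Trans(\bodyX)} \geq \eps/(4\constRect^2)$, the inequality $\areaX{\Trans(\bodyX)\setminus\bodyXTA} \leq \eps\,\areaX{\Trans(\bodyX)}$ holds once $\constD$ is chosen as a sufficiently large constant of order $\constRect^2$ — precisely the freedom that \algref{preprocessing} reserves. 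I expect the only delicate point to be this last passage from the \emph{distance}/width guarantee of \lemref{approx:scheme} to an \emph{area} bound on the symmetric difference: one must use convexity (monotone perimeter and the Steiner formula) rather than a crude perimeter-times-width argument, and one must verify that $\areaX{\Trans(\bodyX)}$ is bounded below by an absolute constant, so that an absolute error of order $\delta$ becomes a relative error of order $\eps$. Everything else is bookkeeping with constants, and normalizing by $\Trans$ is exactly what keeps those constants independent of the input.
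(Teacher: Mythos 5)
Your proposal is correct and follows essentially the same route as the paper: normalize via $\Trans$ so that the overlap region lives in a constant-size square, bound $\areaX{\bodyXT \setminus \bodyXTA}$ by a perimeter-times-collar-width estimate, observe that $\maxOverlap{\bodyXT}{\bodyYT}=\Omega(1)$ after normalization, and absorb the constants into $\constD$. Your write-up is in fact more careful than the paper's (making the Steiner-formula step and the lower bound on $\areaX{\Trans(\bodyX)}$ explicit), but it is the same argument.
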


\begin{proof}
    Consider the overlap of $\bodyXT = \Trans\pth{\bodyX}$ and
    $\bodyYT = \Trans\pth{\bodyY}$.  \lemref{constant:approx} implies
    that any intersection polygon of $\bodyXT$ and $\bodyYT$ can be
    contained (via translation) in $\Trans\pth{\constRect\rectM}$
    (which is a translation of the square $[0,1/2]^2$).  Clearly, in
    this case, $\bodyXT$ and $\bodyXTA$ can both be translated to be
    contained in this square, both contain a disk of constant radius,
    the maximum distance between $\bodyXT$ and $\bodyXTA$ is
    $O(\eps)$, and the total area of ${\bodyXT \setminus \bodyXTA}$ is
    $O(\eps)$, as the perimeter of $\bodyXT \leq 4$. Thus, setting
    $\constD$ to be sufficiently large, implies that $\areaX{\bodyXT
       \setminus \bodyXTA} \leq \eps \maxOverlap{\bodyXT}{\bodyYT}$,
    as $\maxOverlap{\bodyXT}{\bodyYT} = \Omega(1)$. This implies that
    \begin{math}
        \cardin{\MakeBig \AreaOverlap{\bodyXTA}{\trans + \bodyYT} -
           \AreaOverlap{\bodyXT}{\trans + \bodyYT}} \leq \eps
        \maxOverlap{\bodyXT}{\bodyYT},
    \end{math}
    which implies the claims by applying $\Trans^{-1}$ to both sides.
    \InESAVer{ \qed}
\end{proof}

Therefore, $\AreaOverlap{\bodyXA}{\trans + \bodyY}$ is a good
approximation for $\AreaOverlap{\bodyX}{\trans + \bodyY}$. However,
$\AreaOverlap{\bodyXA}{\trans + \bodyY}$ has complexity $O\pth{
   \cardin{\bodyXA}^2 \cardin{\bodyY}^2 }$ \cite{bcdkt-cmotc-98}, in
the worst case, which is still too high.

\begin{lemma}[\approxLevelSet]%
    \lemlab{approx:small:large}%
    Given two convex polygons $\bodyX$ and $\bodyY$, of total
    complexity $n$, and a parameter $\eps$, such that
    $\scaleSimX{\bodyX}{\bodyY} < 1$, then one can construct in $\ds
    O\pth{ n/\eps^2 }$ time, a $\pth[]{\eps, O(1/\eps^2), O(n/\eps^2)
    }$-approximation $\AOA{\cdot}$ to $\AreaOverlap{\bodyX}{\trans +
       \bodyY}$.
\end{lemma}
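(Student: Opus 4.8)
The plan is to reduce the problem in two stages: first replace the small polygon $\bodyX$ by its $O(1/\eps)$-vertex approximation $\bodyXA$ produced by \algref{preprocessing}, and then approximate the resulting overlap function $g(\trans) = \AreaOverlap{\bodyXA}{\trans + \bodyY}$ by a piecewise-constant function whose pieces are the ``rings'' between nested superlevel sets. By \lemref{bounded:width:2}, after rescaling $\eps$ by a constant, we have $\cardin{g(\trans) - \AreaOverlap{\bodyX}{\trans + \bodyY}} \le (\eps/2)\,M$ for all $\trans$, where $M = \maxOverlap{\bodyX}{\bodyY} = \areaX{\bodyX}$. Hence it suffices to build a function $\AOA{\cdot}$ that approximates $g$ to within additive error $(\eps/2)\,M$ and has the arrangement structure required by condition (B).

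For the second stage, I would set the level values $\alpha_i = i\,(\eps/2)\,M$ for $i = 1, \dots, L$, where $L = \ceil{2/\eps} = O(1/\eps)$; note that $0 \le g \le M$ since $\bodyXA \sqsubseteq \bodyY$. For each $i$ I compute the slice $\bodyC_i = \LevelSetX{\alpha_i}{g}$ using \lemref{levelset}. Each $\bodyC_i$ is convex, the slices are nested ($\bodyC_1 \supseteq \bodyC_2 \supseteq \cdots \supseteq \bodyC_L$), and since $\cardin{\bodyXA} = O(1/\eps)$ while $\cardin{\bodyY} = O(n)$, each slice has complexity $O(\cardin{\bodyXA}\cardin{\bodyY}) = O(n/\eps)$. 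I then define $\AOA{\trans} = \alpha_i$ on the ring $\bodyC_i \setminus \bodyC_{i+1}$, and $\AOA{\trans} = 0$ outside $\bodyC_1$.

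Verifying the two conditions is then routine. For condition (A): if $\trans$ lies in the $i$-th ring then $\alpha_i \le g(\trans) < \alpha_{i+1}$, so $\cardin{g(\trans) - \AOA{\trans}} < \alpha_{i+1} - \alpha_i = (\eps/2)\,M$; outside $\bodyC_1$ we have $g(\trans) < \alpha_1 = (\eps/2)\,M$ while $\AOA{\trans}=0$. Adding the first-stage error yields $\cardin{\AreaOverlap{\bodyX}{\trans+\bodyY} - \AOA{\trans}} \le \eps M$, as required. For condition (B): $\AOA{\cdot}$ is constant (a degenerate quadratic) inside every face of the arrangement of the $O(1/\eps)$ convex polygons $\bodyC_1, \dots, \bodyC_L$, each of complexity $O(n/\eps)$; these bounds fit comfortably inside the stated $\pth[]{\eps, O(1/\eps^2), O(n/\eps^2)}$.

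The main obstacle is the running time. Computing each slice independently via \lemref{levelset} costs $O((n/\eps)\log(n/\eps))$, which over $L = O(1/\eps)$ levels overshoots the target $O(n/\eps^2)$ by the recomputation and a logarithmic factor. To control this I would avoid recomputing from scratch and instead perform a single outward sweep: start from an interior point of the innermost slice (obtainable from the maximum-overlap configuration, e.g.\ via \lemref{constant:approx}), maintain the overlap function incrementally as $\trans$ moves outward, and trace $\partial\bodyC_1, \dots, \partial\bodyC_L$ in one pass, amortizing the dynamic convex-hull updates across all levels so that the total work is proportional to the total slice complexity $\sum_i \cardin{\bodyC_i} = O(n/\eps^2)$. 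The delicate points are exploiting the nestedness so that no level set is traced twice, and arguing that each combinatorial change of the vertical decomposition of the overlap polygon is charged only $O(1)$ times during the sweep.
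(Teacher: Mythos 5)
Your construction is sound as far as the approximation quality and the structure of condition (B) are concerned, but it does not establish the claimed running time, and the running time is where the real content of this particular lemma lies. What you describe is, almost verbatim, the proof of the companion result \lemref{approx:small:large:better}: replace $\bodyX$ by $\bodyXA$, compute $O(1/\eps)$ nested superlevel sets of $\AreaOverlap{\bodyXA}{\trans+\bodyY}$ via \lemref{levelset}, and take the piecewise-constant function on the rings. That route provably costs $O\pth{\eps^{-2} n \log n}$: each slice has complexity $m = O(n/\eps)$ and is traced by walking in an arrangement of $O(n/\eps)$ lines using a dynamic convex-hull structure at $O(\log m)$ amortized cost per step. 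Your proposed fix --- a single outward sweep amortized over the total slice complexity $\sum_i \cardin{\bodyC_i} = O(n/\eps^2)$ --- does not remove the logarithm: even if every combinatorial event is charged $O(1)$ times, each step of the walk still pays $O(\log m)$ to locate the next crossing, and the $L$ nested level curves are separate closed curves that must each be traversed in full. So the best this method yields is $O(n\eps^{-2}\log n)$, i.e., the bound of \lemref{approx:small:large:better}, not the stated $O(n/\eps^2)$.

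The paper's actual proof of this lemma is different and much simpler. Since $\bodyX \sqsubseteq \bodyY$, we have $\maxOverlap{\bodyX}{\bodyY} = \areaX{\bodyX}$, so an additive error of $\eps\,\areaX{\bodyX}$ is what is needed. Enclose $\bodyX$ in the rectangle of \lemref{fast:convex:approx}, lay down an $O(1/\eps)\times O(1/\eps)$ grid in it, and let $\PntSet$ be the set of grid points inside $\bodyX$; then $\areaX{\bodyX}\cdot\cardin{(\trans+\bodyY)\cap\PntSet}/\cardin{\PntSet}$ approximates $\AreaOverlap{\bodyX}{\trans+\bodyY}$ to within $\eps\,\areaX{\bodyX}$, because the boundary of the convex set $\bodyX \cap (\trans+\bodyY)$ meets only $O(1/\eps)$ grid cells. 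Each point $\pnt \in \PntSet$ contributes the indicator of the convex region $\pnt - \bodyY$, a $180^{\degree}$-rotated translate of $\bodyY$, so the approximation is governed by an arrangement of $O(1/\eps^2)$ convex polygons, each of complexity $O(n)$, built in $O(n/\eps^2)$ time with no data structures at all. If you wish to keep your level-set construction, present it as a proof of \lemref{approx:small:large:better} (where it is exactly right, and gives the stronger $\pth[]{\eps, O(1/\eps), O(n/\eps)}$ guarantee) rather than of this lemma.
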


\begin{proof}
    There is a translation of $\bodyX$ such that it is contained
    completely in $\bodyY$. Approximate $\bodyX$ from the outside by a
    rectangle $\rectA$, using \lemref{fast:convex:approx}.  Next,
    spread a grid in $\rectA$ by partitioning each of its edges into
    $O(1/\eps)$ equal length intervals. Let $\PntSet$ be the set of
    points of the grid that are in $\bodyX$. It is easy to verify,
    that for any convex body $\bodyC$ and a translation $\trans$, we
    have
    \begin{align*}
        \cardin{\AreaOverlap{\bodyX}{ \trans + \bodyC } -
           \frac{\cardin{(\trans+\bodyC) \cap
                 \PntSet}}{\cardin{\PntSet}}}%
        \leq%
        \eps \, \areaX{\bodyX}
    \end{align*}
    Namely, to approximate the overlap area for $\trans + \bodyY$, we
    need to count the number of points of $\PntSet$ that it covers. To
    this end, for each point $\pnt \in \PntSet$, we generate a
    $180^{\degree}$ rotated and translated copy of $\bodyY$, denoted
    by $\bodyY_\pnt'$, such that $\pnt \in \trans + \bodyY$ if and
    only if $\trans \in \bodyY_\pnt'$.

    Clearly, the generated set of polygons is the desired 
    $\pth[]{\eps, O(1/\eps^2), O(n/\eps^2) }$-approximation $\AOA{\cdot}$
    to $\AreaOverlap{\bodyX}{\trans + \bodyY}$.

    The time to build this approximation is $O(n/\eps^2)$.
\end{proof}

We next describe a slightly slower algorithm that generates a slightly
better approximation.

\begin{lemma}[\approxLevelSet]%
    \lemlab{approx:small:large:better}%
    Given two convex polygons $\bodyX$ and $\bodyY$, of total
    complexity $n$, and a parameter $\eps$, such that
    $\scaleSimX{\bodyX}{\bodyY} < 1$, then one can construct in $\ds
    O\pth{ \eps^{-2} n \log n}$ time, a $\pth[]{\eps, O(1/\eps),
       O(n/\eps) }$-approximation $\AOA{\cdot}$ to
    $\AreaOverlap{\bodyX}{\trans + \bodyY}$.
\end{lemma}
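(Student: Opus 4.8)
The plan is to combine the two tools already developed: the width-based approximation of the smaller polygon (\lemref{approx:scheme}, \lemref{bounded:width:2}) to tame the complexity of the overlap function, together with the level-set machinery of \lemref{levelset} to produce a compact, \emph{piecewise-constant} description of that function. As in \algref{preprocessing}, I would first replace $\bodyX$ by the approximation $\bodyXA$ with only $O(1/\eps)$ vertices, computed by \lemref{approx:scheme} run with parameter $N = O(1/\eps)$, so that by \lemref{bounded:width:2} the function $g(\trans) = \AreaOverlap{\bodyXA}{\trans + \bodyY}$ lies within additive $(\eps/2)\maxOverlap{\bodyX}{\bodyY}$ of the target $\AreaOverlap{\bodyX}{\trans + \bodyY}$ at every translation. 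Crucially, $\bodyXA \subseteq \bodyX$ still satisfies $\scaleSimX{\bodyXA}{\bodyY} < 1$, so $\max_{\trans} g(\trans) = \areaX{\bodyXA}$, and a translation realizing this maximum can be found in linear time via \lemref{scaling}; this translation lies inside every superlevel set of $g$ and serves as the common seed point required by \lemref{levelset}.

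Next I would slice $g$ at the levels $\alpha_j = j \cdot (\eps/2)\maxOverlap{\bodyX}{\bodyY}$ for $j = 1, \ldots, T$, where $T = \ceil{2/\eps} = O(1/\eps)$ suffices to cover the whole range $[0, \areaX{\bodyXA}] \subseteq [0, \maxOverlap{\bodyX}{\bodyY}]$. Each slice $\bodyC_j = \LevelSetX{\alpha_j}{g}$ is a convex polygon of complexity $O\pth{\cardin{\bodyXA}\cardin{\bodyY}} = O(n/\eps)$ by \lemref{levelset}, and is computable from the common seed point in $O\pth{(n/\eps)\log(n/\eps)}$ time; over all $T = O(1/\eps)$ slices this totals $O\pth{\eps^{-2} n \log(n/\eps)}$, which matches the claimed bound (absorbing $\log(n/\eps)$ into $\log n$). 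I would then define the approximating function $\AOA{\trans}$ to equal $\alpha_j$ on $\bodyC_j \setminus \bodyC_{j+1}$, and $0$ outside $\bodyC_1$. Since the $\bodyC_j$ are nested, $\AOA{\cdot}$ is a constant — hence a degenerate quadratic — inside every face of the arrangement $\ArrX{\bodyC_1, \ldots, \bodyC_T}$, which gives condition (B) of the approximation definition with $\numPolygons = O(1/\eps)$ polygons, each of complexity $\totalN = O(n/\eps)$.

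For condition (A), I observe that whenever $\trans \in \bodyC_j \setminus \bodyC_{j+1}$ we have $\alpha_j \leq g(\trans) < \alpha_{j+1}$, so $\cardin{\AOA{\trans} - g(\trans)} < \alpha_{j+1} - \alpha_j = (\eps/2)\maxOverlap{\bodyX}{\bodyY}$ (and the same bound holds on the unbounded face, where $\AOA{\trans} = 0$ and $g(\trans) < \alpha_1$). Combining this with the error bound of \lemref{bounded:width:2} by the triangle inequality yields $\cardin{\AOA{\trans} - \AreaOverlap{\bodyX}{\trans + \bodyY}} \leq \eps \maxOverlap{\bodyX}{\bodyY}$ for all $\trans \in \Re^2$, as required.

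The main obstacle I anticipate is the bookkeeping around the seed point and the range of levels, rather than any conceptual difficulty. I must verify that a single precomputed translation lies inside \emph{all} the slices simultaneously — this holds because it realizes the global maximum $\areaX{\bodyXA}$ of $g$ — and that $T = O(1/\eps)$ levels genuinely cover the range, which follows from $\max g = \areaX{\bodyXA} \leq \areaX{\bodyX} = \maxOverlap{\bodyX}{\bodyY}$. Everything else is a direct instantiation of \lemref{bounded:width:2} and \lemref{levelset}, so the remaining work is only to confirm that the per-slice computation can reuse the same seed point and that the nestedness of the $\bodyC_j$ makes $\AOA{\cdot}$ well defined on each face.
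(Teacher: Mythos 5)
Your proposal is correct and follows essentially the same route as the paper's proof: approximate the smaller polygon $\bodyX$ by $\bodyXA$, slice the resulting overlap function at $O(1/\eps)$ evenly spaced levels using \lemref{levelset}, define $\AOA{\cdot}$ to be constant between consecutive nested slices, and combine the quantization error with \lemref{bounded:width:2} via the triangle inequality. The only (harmless) deviation is that you obtain the common seed point for the slice computations from the containment translation of \lemref{scaling}, whereas the paper gets it by running the exact maximum-overlap algorithm of de~Berg \etal on $\bodyXA$ and $\bodyY$; both yield a point lying in every nonempty superlevel set.
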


\begin{proof:in:appendix}{\lemref{approx:small:large:better}}
    We compute $\bodyXA$, as above with approximation parameter $\epsA
    = \eps/4$.  Next, using the algorithm of \cite{bcdkt-cmotc-98}, we
    compute $\maxOverlap{\bodyXA}{ \bodyY }$ in $O( n' \log n') = O(n
    \log n)$ time, where $n' = \cardin{\bodyXA} + \cardin{\bodyY} =
    O(n)$.

    We approximate the function $\AreaOverlap{\bodyXA}{ \bodyY+\trans
    }$ by constructing the $\alpha_i$-slices, where $\alpha_i = \min(
    1, (i+1)\epsA) \maxOverlap{\bodyXA}{ \bodyY }$, for $i = 0, 1,
    \ldots, M$, where $M = \ceil{ 1/\epsA}$.  To this end, we deploy
    the algorithm of \lemref{levelset} for each slice, which takes $O(
    m \log m)$ time, where $m = \cardin{\bodyXA}{\cardin{\bodyY}} =
    O(n/\eps)$.

    Let us denote the $i$\th region constructed, by $\slice_i$, that
    is, $\slice_i$ is a convex figure whose boundary corresponds to
    all translations $\trans$ such that
    $\AreaOverlap{\bodyXA}{\bodyY+\trans}= \alpha_i$.  Clearly,
    $\slice_{i+1}$ lies entirely within $\slice_i$. Given the
    description of $\slice_i$'s, for any translation $\trans$,  we 
    define
    \begin{align*}
        \AOA{\trans}%
        =%
        \begin{cases}
            \alpha_i & \trans \in \slice_i \setminus \slice_{i+1} \\
            0 & \trans \notin \slice_0.
        \end{cases}
    \end{align*}
       
    It is now straightforward to verify this is the desired
    approximation. Indeed, for $\trans \notin \slice_0$, we have by
    \lemref{bounded:width:2}, that
    \begin{align*}
        \cardin{\MakeBig \AOA{\trans} - \AreaOverlap{\bodyX}{\trans +
              \bodyY}}%
        &\leq%
        \cardin{\MakeBig \AOA{\trans} - \AreaOverlap{\bodyXA}{\trans +
              \bodyY}}%
        +%
        \cardin{\MakeBig \AreaOverlap{\bodyXA}{\trans + \bodyY} -
           \AreaOverlap{\bodyX}{\trans + \bodyY}}%
        \InESAVer{\\&}%
        \leq%
        \alpha_0 + \epsA \maxOverlap{\bodyX}{\bodyY}%
        \InNotESAVer{\\&}%
        =%
        \epsA \pth{ \maxOverlap{\bodyXA}{ \bodyY } +
           \maxOverlap{\bodyX}{\bodyY}}%
        \InESAVer{\\&}%
        \leq%
        2\epsA \maxOverlap{\bodyX}{\bodyY}%
        =%
        (\eps/2) \maxOverlap{\bodyX}{\bodyY},
    \end{align*}
    as desired. Similarly, if $\trans \in \slice_i \setminus
    \slice_{i+1}$ then $\alpha_i \leq \AreaOverlap{\bodyXA}{\trans +
       \bodyY} \leq \alpha_{i+1}$, and
    \begin{align*}
        \cardin{\MakeBig \ApproxOverlapC(\trans) -
           \AreaOverlap{\bodyX}{\trans + \bodyY}}%
        &\leq%
        \cardin{\MakeBig \ApproxOverlapC(\trans) -
           \AreaOverlap{\bodyXA}{\trans + \bodyY}}%
        +%
        \cardin{\MakeBig \AreaOverlap{\bodyXA}{\trans + \bodyY} -
           \AreaOverlap{\bodyX}{\trans + \bodyY}}%
        \\
        &\leq%
        \alpha_{i+1} - \alpha_i + \epsA \maxOverlap{\bodyX}{\bodyY}%
        =%
        \epsA \maxOverlap{\bodyXA}{ \bodyY } + \epsA
        \maxOverlap{\bodyX}{\bodyY}%
        \\
        & \leq%
        (\eps/2) \maxOverlap{\bodyX}{\bodyY}.
    \end{align*}
    Clearly, this function is defined by an onion like set of $
    \numPolygons = O(1/\eps)$ polygons, and the maximum complexity of
    these polygons is $\totalN = O(m ) = O(n/\eps)$.

    The overall running time is dominated by computing the slices,
    which takes overall $O( \numPolygons m \log m ) \linebreak[0] = O(
    n \eps^{-2} \log n \eps^{-1})$ time.  \InESAVer{ \qed}
\end{proof:in:appendix}

\subsection{If the two polygons are incomparable}
\seclab{large}

The more interesting case, is when the maximum intersection of
$\bodyX$ and $\bodyY$ is significantly smaller than both polygons;
that is, $\scaleSimX{\bodyX}{\bodyY} \geq 1$ and
$\scaleSimX{\bodyY}{\bodyX} \geq 1$. Surprisingly, in this case, we
can approximate both polygons simultaneously.

\begin{lemma}%
    \lemlab{bounded:width}%
    Given convex polygons $\bodyX$ and $\bodyY$, such that
    $\scaleSimX{\bodyX}{\bodyY} \geq 1$ and
    $\scaleSimX{\bodyY}{\bodyX} \geq 1$, then the widths of $\bodyXT =
    \Trans\pth{\bodyX}$ and $\bodyYT = \Trans\pth{\bodyY}$, as
    computed by \algref{preprocessing}, are bounded by $7$.
\end{lemma}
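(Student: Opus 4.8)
The plan is to carry out the whole argument in the transformed coordinate system, where the maximum-overlap region is normalized to sit inside a small square. Since $\Trans$ is a non-singular affine map, \obsref{transformation} shows that $\bodyMaxT := \Trans(\bodyMax)$ is exactly the region realizing the maximum overlap of $\bodyXT$ and $\bodyYT$, and a direct check shows $\scaleSimX{\bodyX}{\bodyY} = \scaleSimX{\bodyXT}{\bodyYT}$ (containment-under-translation and scaling are both preserved by affine maps), so both hypotheses survive the transformation. The key normalization comes from the ``furthermore'' clause of \lemref{constant:approx}: applying $\Trans$, which sends $2\constRect\rectM$ to $[0,1]^2$ and hence $\constRect\rectM$ to a square of side $1/2$, shows that \emph{every} intersection $\bodyXT \cap (\vecA + \bodyYT)$ is contained, up to translation, in a square of side $1/2$. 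In particular each such intersection has diameter at most $1/\sqrt2$.

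Next I would bound $\widthX{\bodyXT}$ (the bound on $\widthX{\bodyYT}$ being symmetric, with the two scaling hypotheses swapped). Write $w = \widthX{\bodyXT}$. By \lemref{width:to:ball}, $\bodyXT$ contains a disk $D$ of radius $r = w/(2\sqrt3)$, centered at some point $\pnt$. The crucial point is that the hypothesis $\scaleSimX{\bodyYT}{\bodyXT} \geq 1$ forces $\bodyYT$ to be ``long'': letting $\Delta = \diameterX{\bodyYT}$, the body $\bodyYT$ is contained in a disk of radius $\Delta$, and since that disk equals $(\Delta/r)$ times $D \subseteq \bodyXT$, we get $\bodyYT \sqsubseteq (\Delta/r)\,\bodyXT$, whence $\scaleSimX{\bodyYT}{\bodyXT} \leq \Delta/r$; combined with the hypothesis this yields $\Delta \geq r$.

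Using this I would exhibit one translate of $\bodyYT$ whose intersection with $\bodyXT$ is large, contradicting the normalization when $w$ is too big. Pick a diameter segment $\sigma$ of $\bodyYT$ (so $\sigma \subseteq \bodyYT$ and $|\sigma| = \Delta \geq r$), and translate $\bodyYT$ so that the midpoint of $\sigma$ sits at $\pnt$. The portion of $\sigma$ within distance $r$ of $\pnt$ has length $\min(\Delta, 2r) \geq r$, lies in $D \subseteq \bodyXT$, and also lies in this translate of $\bodyYT$, so the corresponding intersection has diameter at least $r$. By the normalization this forces $r \leq 1/\sqrt2$, hence $w = 2\sqrt3\,r \leq \sqrt6 < 7$; the same argument with $\bodyXT$ and $\bodyYT$ exchanged (using $\scaleSimX{\bodyXT}{\bodyYT} \geq 1$) bounds $\widthX{\bodyYT}$.

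The one place both hypotheses are genuinely needed — and the main thing to get right — is the step forcing the partner body to be ``long enough'' to span the inscribed disk of the wide body. Without $\scaleSimX{\bodyYT}{\bodyXT}\ge1$ one could make $\bodyXT$ arbitrarily fat while keeping every overlap tiny, which is precisely the degenerate ``one body fits inside the other'' situation that the two scaling assumptions exclude. Everything else is routine: the constant this produces ($\sqrt6$) is comfortably below the stated $7$, so there is ample slack, and the only bookkeeping is to confirm that $\Trans$ really carries $\constRect\rectM$ to a side-$1/2$ square and that the diameter-of-intersection lower bound is computed correctly.
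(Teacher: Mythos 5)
Your proof is correct and follows essentially the same route as the paper's: inscribe a disk of radius proportional to the width via \lemref{width:to:ball}, center a diameter segment of the other polygon in that disk, and contradict the fact (from \lemref{constant:approx} and the normalization of \algref{preprocessing}) that every intersection must fit in a translate of $[0,1/2]^2$. The only difference is cosmetic: where the paper splits into the cases $\ell < 2$ and $\ell > 2$ (using the scaling hypothesis to rule out the first), you derive $\Delta \geq r$ directly from that hypothesis, which unifies the two cases and incidentally yields the sharper bound $\sqrt{6}$ in place of $7$.
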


\begin{proof:in:appendix}{\lemref{bounded:width}}
    Let $\wdX = \widthX{\bodyXT}$ and $\wdY =
    \widthX{\Trans\pth{\bodyY}}$.  By \lemref{width:to:ball}, we have
    that $\diskX{\wdX/3.5 } \sqsubseteq \bodyXT$ and $\diskX{\wdY/3.5
    } \sqsubseteq \bodyYT$.
    
    So, assume for the sake of contradiction, that $\wdX \geq 7$. This
    implies that $\bodyXT$ contains a disk of radius $2$, which in
    turn contains the unit square. In particular, let $\ell =
    \diameterX{\bodyYT}$, if $\ell < 2$, then $\bodyYT$ is contained
    in a disk of radius $2$, implies that $\bodyYT \sqsubseteq
    \bodyXT$, a contradiction.
    
    Otherwise, if $\ell > 2$ then there is a translation of $\bodyYT$
    such that its intersection with $\bodyYT$ has length $>2$ (indeed,
    consider the segment realizing the diameter of $\bodyYT$, and
    translate it so its middle point is in the center of the disk of
    radius $2$ inside $\bodyXT$). But then, this intersection is not
    contained in $[0,1]^2$ under any translation, which contradicts
    \lemref{constant:approx}.
    
    The case $\wdY \geq 7$ is handled in a similar fashion.
\end{proof:in:appendix}

\begin{lemma}
    \lemlab{approx:incomparable}%
    Given two convex polygons $\bodyX$ and $\bodyY$, of total
    complexity $n$, and a parameter $\eps$, such that
    $\scaleSimX{\bodyX}{\bodyY} \geq 1$ and
    $\scaleSimX{\bodyY}{\bodyX} \geq 1$, then one can construct in
    $\ds O\pth{ n + 1/\eps^2 }$ time, a $\pth[]{\eps, O(1/\eps),
       O(1/\eps) }$-approximation $\AOA{\cdot}$ to
    $\AreaOverlap{\bodyX}{\trans + \bodyY}$.
\end{lemma}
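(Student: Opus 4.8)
The idea is to approximate \emph{both} polygons at once (in contrast to the small case of \secref{small}), which is legitimate precisely because the incomparability assumption forces both polygons to be ``fat'' in the transformed frame. Concretely, I would run \algref{preprocessing} to obtain $\bodyXA$ and $\bodyYA$, each a convex polygon with $O(1/\eps) = O(N)$ vertices, where $N = \ceil{\constD/\eps}$, and then simply set $\AOA{\trans} = \AreaOverlap{\bodyXA}{\trans + \bodyYA}$. Two things must be verified: the additive error bound (condition (A) of the definition of an $(\eps,\numPolygons,\totalN)$-approximation), and the piecewise-quadratic structure with the claimed complexity (condition (B)).

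For (A), I would argue in the transformed coordinates and transfer back through \obsref{transformation}, since $\Trans$ is affine and preserves the ratio of overlap area to maximum overlap. By \lemref{bounded:width} the widths satisfy $\widthX{\bodyXT}, \widthX{\bodyYT} \le 7$, so \lemref{approx:scheme} with parameter $N$ guarantees that $\bodyXTA \subseteq \bodyXT$ with every point of $\bodyXT$ within $\widthX{\bodyXT}/N \le 7/N = O(\eps)$ of $\bodyXTA$, and likewise for $\bodyYT$. Since the two approximants are contained in the originals, for every $\trans$
\[
   0 \le \AreaOverlap{\bodyXT}{\trans+\bodyYT} - \AreaOverlap{\bodyXTA}{\trans+\bodyYTA}
      \le \areaX{(\bodyXT \setminus \bodyXTA) \cap (\trans+\bodyYT)} + \areaX{\bodyXT \cap (\trans + (\bodyYT \setminus \bodyYTA))}.
\]
By \lemref{constant:approx} the whole intersection $\bodyXT \cap (\trans+\bodyYT)$ fits, under translation, inside $[0,1/2]^2$ (as in \lemref{bounded:width:2}); hence each sliver above, being a subset of this intersection, lies in a unit box and is squeezed into a strip of width $O(\eps)$ along a convex arc of length $O(1)$, so its area is $O(\eps)$. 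As in \lemref{bounded:width:2}, \lemref{constant:approx} also gives $\maxOverlap{\bodyXT}{\bodyYT} = \Omega(1)$, so for $\constD$ large enough the total error is at most $\eps\,\maxOverlap{\bodyXT}{\bodyYT}$; applying $\Trans^{-1}$ establishes (A).

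For (B), the key observation is that the combinatorial type of $\bodyXA \cap (\trans+\bodyYA)$ — and hence the quadratic formula for its area — changes only when a vertex of one polygon crosses an edge of the other. The set of such critical translations is exactly
\[
   \Big(\bigcup_{v} \partial(\bodyXA - v)\Big)
   \;\cup\;
   \Big(\bigcup_{u} \partial(u - \bodyYA)\Big),
\]
where $v$ ranges over the vertices of $\bodyYA$ and $u$ over the vertices of $\bodyXA$; this is the union of the boundaries of the $\cardin{\bodyYA} + \cardin{\bodyXA} = O(1/\eps)$ convex polygons $\bodyXA - v$ (translates of $\bodyXA$) and $u - \bodyYA$ (reflected translates of $\bodyYA$), each of complexity $O(1/\eps)$. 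Taking these as $\Poly_1,\dots,\Poly_\numPolygons$, inside every face of $\ArrX{\Poly_1,\dots,\Poly_\numPolygons}$ the intersection has fixed combinatorial structure, so its vertices are affine in $\trans$ and $\AOA{\trans}$ is a single quadratic. This yields the desired $\pth[]{\eps, O(1/\eps), O(1/\eps)}$-approximation, and the running time is $O(n)$ for \algref{preprocessing} plus $O(1/\eps^2)$ to write down the $O(1/\eps)$ polygons of size $O(1/\eps)$ each, for a total of $O(n + 1/\eps^2)$.

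\textbf{Main obstacle.} The crux is the error analysis, and specifically the justification that approximating both polygons simultaneously is safe. This hinges entirely on \lemref{bounded:width}, which bounds both widths and hence both sliver widths by $O(\eps)$, together with the fact that every relevant intersection is confined to an $O(1)$ box in the transformed frame (via \lemref{constant:approx}). Once width-boundedness is in hand the sliver areas are $O(\eps)$ and the error bound follows; the grouping of the $O(1/\eps^2)$ critical lines into $O(1/\eps)$ convex polygons is then only a matter of repackaging.
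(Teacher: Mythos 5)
Your proposal is correct and follows essentially the same route as the paper: approximate both polygons via \algref{preprocessing}, bound each sliver's contribution by (perimeter of the $O(1)$-size intersection region) $\times$ ($\widthX{\bodyXT}/N$) using \lemref{bounded:width} and \lemref{constant:approx}, and transfer back through $\Trans^{-1}$. The only difference is that you spell out the arrangement of $O(1/\eps)$ convex polygons for condition (B) explicitly, where the paper simply cites \si{de Berg} \etal \cite{bcdkt-cmotc-98}.
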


\begin{proof:in:appendix}{\lemref{approx:incomparable}}
    Use \algref{preprocessing} to compute $\bodyXA$ and $\bodyYA$,
    both of complexity $O(1/\eps)$.  We set $\AOA{\trans} =
    \AreaOverlap{\bodyXA}{\trans + \bodyYA}$.  \cite{bcdkt-cmotc-98}
    describes how to describe the function
    $\AreaOverlap{\bodyXA}{\trans + \bodyYA}$ as an arrangement of
    $O\pth{\cardin{\bodyXA} + \cardin{\bodyYA}}$ polygons, each of
    complexity $\cardin{\bodyXA}$ or $\cardin{\bodyYA}$. Thus, this
    result in the desired approximation, that has total complexity
    $O(1/\eps^2)$, and that can be computed in $O(n + 1/\eps^2)$ time.

    In the following, we use the notation of \algref{preprocessing}:
    $\bodyXT = \Trans\pth{\bodyX}$ and $\bodyYT = \Trans\pth{\bodyY}$,
    $\bodyXTA = \approxPolyWidth\pth{\Trans\pth{\bodyX}, \, N }$, and
    $\bodyYTA = \approxPolyWidth\pth{\Trans\pth{\bodyX}, \, N }$.

    \lemref{constant:approx} implies that any intersection polygon of
    $\bodyXT$ and $\bodyYT$ can be contained in
    $\Trans\pth{\constRect\rectM}$.  The error due to approximation of
    $\bodyXT$, is $\areaX{\bodyXT \setminus \bodyXTA}$. The part of
    this error that can contribute to the area of overlap, is bounded
    by portion of $\bodyXT \setminus \bodyXTA$, which can be included
    inside $\Trans\pth{\constRect\rectM}$.
    
    \begin{figure}
        \centering
        \includegraphics[height=4cm]{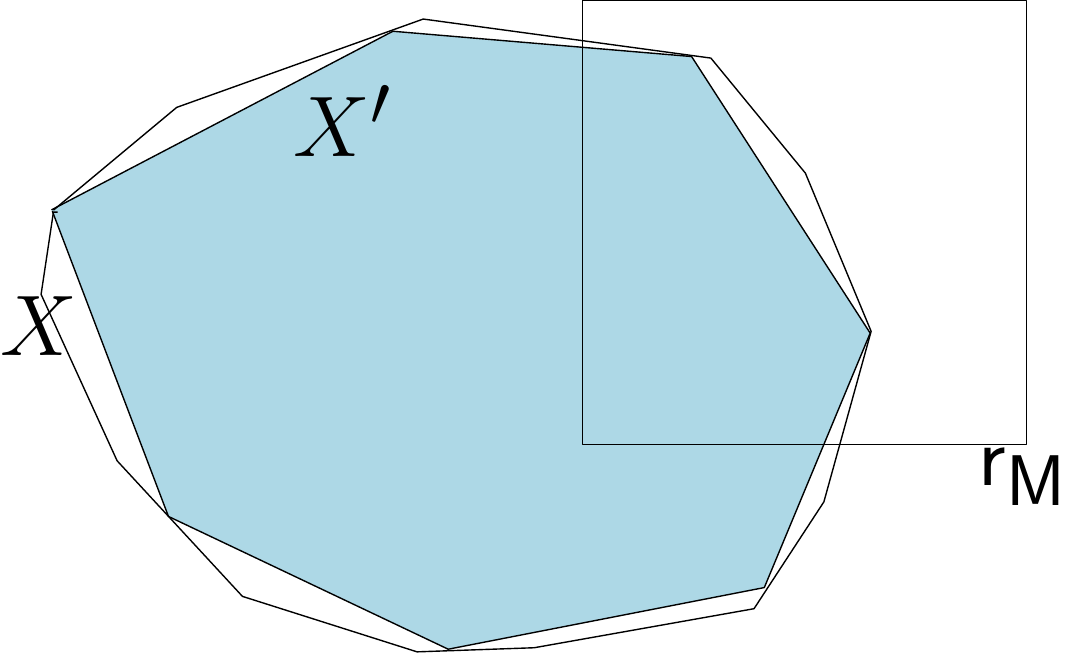}
        \caption{Error bound.}
        \figlab{errorbound}
    \end{figure}

    The length of the boundary of $\bodyXTA$, which can be placed
    inside $\Trans\pth{\constRect\rectM}$, is bounded by the perimeter
    of $\Trans\pth{\constRect\rectM}$. Also, our approximation scheme
    (\lemref{approx:scheme}) ensures that the distance between
    $\bodyXT$ and $\bodyXTA$ (along the direction of shortest
    diameter) is bounded by $\WidthX{ \bodyXT } /N $, where $N$ is a
    chosen parameter.  Consider any fixed $\trans$ -- for the sake of
    simplicity of exposition, we assume the intersection $\bodyXT \cap
    \pth[]{\trans+\bodyYT}$ as being inside the square $\Trans\pth{
       \constRect\rectM}$ (if not, we can translate $\Trans\pth{
       \constRect\rectM}$ so this assumption holds).  As $N = \ceil{
       \constD/ \eps}$, and for $\constD$ sufficiently large constant,
    the error due to the approximation of $\bodyXT$ is bounded by
    \begin{align*}
        \errBX%
        &=%
        \cardin{ \MakeBig \AreaOverlap{\bodyXTA}{\trans + \bodyYT} -
           \AreaOverlap{\bodyXT}{\trans + \bodyYT} }%
        =%
        \cardin{ \MakeBig \AreaOverlap{\bodyXT \setminus
              \bodyXTA}{\trans + \bodyYT} }%
        \\
        &\leq%
        \mathrm{perimeter}\pth{ \MakeBig \bodyXT \cap \pth[]{\trans +
              \bodyYT}} \times \frac{\WidthX{\bodyXT}}{N} \leq%
        \mathrm{Perimeter}\pth{\Trans\pth{\constRect\rectM}} \times
        \frac{\WidthX{\bodyXT}}{N}%
        \\%
        &\leq%
        4 \times \frac{7}{N} \leq%
        \frac{\eps}{4} \maxOverlap{\bodyXT}{\bodyYT},
    \end{align*}
    by \lemref{bounded:width}, and since by \lemref{constant:approx}
    we have $\maxOverlap{\bodyXT}{\bodyYT} = \Omega(1)$. A symmetric
    argument works for the error $\errBY$ of the overlap caused by the
    approximation of $\bodyYT$ by $\bodyYTA$. We conclude that 
    \begin{align*}
        \err_\Trans%
        &=%
        \cardin{ \MakeBig \AreaOverlap{\bodyXTA}{\trans + \bodyYTA} -
           \AreaOverlap{\bodyXT}{\trans + \bodyYT} }%
        \\%
        &\leq%
        \cardin{ \MakeBig%
           \AreaOverlap{\bodyXTA}{\trans + \bodyYTA} -
           \AreaOverlap{\bodyXT}{\trans + \bodyYTA}}%
        +%
        \cardin{ \MakeBig%
           \AreaOverlap{\bodyXT}{\trans + \bodyYTA} -
           \AreaOverlap{\bodyXT}{\trans + \bodyYT} }%
        \\
        &\leq%
        \cardin{ \MakeBig%
           \AreaOverlap{\bodyXTA}{\trans + \bodyYT} -
           \AreaOverlap{\bodyXT}{\trans + \bodyYT}}%
        +%
        \errBY%
        \leq %
        \errBX + \errBY%
        \leq%
        \frac{\eps}{2} \maxOverlap{\bodyXT}{\bodyYT},
    \end{align*}
    since $\bodyYTA \subseteq \bodyYT$. By applying $\Trans^{-1}$ to
    the above, we get that for any translation $\trans$, it holds
    \begin{align*}
        \err%
        &=%
        \cardin{ \MakeBig \AOA{\trans} -\AreaOverlap{\bodyX}{\trans +
              \bodyY} }%
        = \cardin{ \MakeBig \AreaOverlap{\bodyXA}{\trans + \bodyYA} -
           \AreaOverlap{\bodyX}{\trans + \bodyY} }%
        \leq%
        \frac{\eps}{2} \maxOverlap{\bodyX}{\bodyY}. %
    \end{align*}
    \InESAVer{ \qed}
\end{proof:in:appendix}

\mysubsubsection{The result}

By combining \lemref{approx:small:large} and
\lemref{approx:incomparable} (deciding which one to apply can be done
by computing $\scaleSimX{\bodyX}{\bodyY}$ and
$\scaleSimX{\bodyY}{\bodyX}$, which takes $O(n)$ time), we get the
following.

\begin{lemma}
    \lemlab{approx:convex:polygons}%
    Given two convex polygons $\bodyX$ and $\bodyY$, of total
    complexity $n$, and a parameter $\eps$, one can construct in $\ds
    O\pth{ n /\eps^2 }$ time, a $\pth[]{\eps, O(1/\eps^2), O(n/\eps^2)
    }$-approximation $\AOA{\cdot}$ to $\AreaOverlap{\bodyX}{\trans +
       \bodyY}$.
\end{lemma}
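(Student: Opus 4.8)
The plan is to dispatch to the two regimes already handled, using the scaling similarities to decide which one applies. First I would run \algSSim (\lemref{scaling}) twice, computing $\scaleSimX{\bodyX}{\bodyY}$ and $\scaleSimX{\bodyY}{\bodyX}$, both in $O(n)$ time. These two numbers determine three exhaustive cases: (i) $\scaleSimX{\bodyX}{\bodyY} < 1$, so $\bodyX$ is the smaller polygon; (ii) $\scaleSimX{\bodyX}{\bodyY} \geq 1$ but $\scaleSimX{\bodyY}{\bodyX} < 1$, the symmetric situation with $\bodyY$ smaller; and (iii) both similarities are at least $1$, the incomparable regime. (Cases (i) and (ii) are in fact mutually exclusive, since $\bodyX \sqsubseteq \alpha \bodyY$ and $\bodyY \sqsubseteq \beta \bodyX$ with $\alpha, \beta < 1$ would force $\bodyX$ into a scaled-down copy of itself, contradicting area monotonicity; but the dispatch does not rely on this.)

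In case (i) I would invoke \lemref{approx:small:large} directly on the ordered pair $(\bodyX, \bodyY)$, obtaining in $O(n/\eps^2)$ time a $\pth[]{\eps, O(1/\eps^2), O(n/\eps^2)}$-approximation of $\AreaOverlap{\bodyX}{\trans + \bodyY}$, which is exactly the claimed bound. Case (iii) is handled by \lemref{approx:incomparable}, which returns a strictly better $\pth[]{\eps, O(1/\eps), O(1/\eps)}$-approximation in $O(n + 1/\eps^2)$ time; since a bound on the number of polygons and on their complexity is an upper bound, and all of these parameters are dominated by those in the statement, this approximation also qualifies as a $\pth[]{\eps, O(1/\eps^2), O(n/\eps^2)}$-approximation, so nothing more is needed here.

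The only case requiring a short argument is (ii), where the two polygons must be swapped before \lemref{approx:small:large} applies. There I would run that lemma on the pair $(\bodyY, \bodyX)$ to obtain an approximation $\widetilde{\psi}$ of $\trans' \mapsto \AreaOverlap{\bodyY}{\trans' + \bodyX}$, and then set $\AOA{\trans} = \widetilde{\psi}(-\trans)$. This is legitimate because $\AreaOverlap{\bodyX}{\trans + \bodyY} = \areaX{\bodyX \cap \pth[]{\trans + \bodyY}} = \areaX{\bodyY \cap \pth[]{-\trans + \bodyX}} = \AreaOverlap{\bodyY}{-\trans + \bodyX}$, and because $\maxOverlap{\bodyX}{\bodyY} = \maxOverlap{\bodyY}{\bodyX}$; hence the additive-error condition (A) transfers verbatim from $\widetilde{\psi}$ to $\AOA{\cdot}$. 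For the structural condition (B), if $\widetilde{\psi}$ is a single quadratic inside every face of the arrangement of the polygons $\Poly_1, \ldots, \Poly_\numPolygons$ produced by \lemref{approx:small:large}, then $\AOA{\trans} = \widetilde{\psi}(-\trans)$ is a single quadratic inside every face of the arrangement of the reflected polygons $-\Poly_1, \ldots, -\Poly_\numPolygons$, whose number and individual complexities are unchanged.

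Taking the worse of the three regimes, the output has $O(1/\eps^2)$ polygons of complexity $O(n/\eps^2)$ and is built in $O(n/\eps^2)$ time, matching the statement. The one place to be careful — and the only content beyond routine case selection — is the reflection bookkeeping in case (ii): one must check that negating the translation argument really does convert an approximation of $\AreaOverlap{\bodyY}{\cdot + \bodyX}$ into one of $\AreaOverlap{\bodyX}{\cdot + \bodyY}$ in \emph{both} senses (A) and (B) of the definition, which is precisely the computation above.
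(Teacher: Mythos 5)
Your proposal is correct and matches the paper's approach exactly: the paper likewise computes $\scaleSimX{\bodyX}{\bodyY}$ and $\scaleSimX{\bodyY}{\bodyX}$ in $O(n)$ time and dispatches to \lemref{approx:small:large} or \lemref{approx:incomparable} accordingly. The only difference is that you spell out the reflection bookkeeping for the swapped case, which the paper leaves implicit; that detail is handled correctly.
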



\section{Approximating the Maximum Overlap of %
   Polygons}
\seclab{approx:main}

The input is two polygons $\Poly$ and $\PolyA$ in the plane, of total
complexity $n$, each of them can be decomposed into at most $k$ convex
polygons. Our purpose is to find the translation that maximizes the
area of overlap.


\paragraph{The Algorithm.}%
We decompose the polygons $\Poly$ and $\PolyA$ into minimum number of
interior disjoint convex polygons \cite{ks-otbcd-02}, in time $O\pth{
   n+ k^2 \min(k^2, n) }$ (some of these convex polygons can be
empty). Then, for every pair $\Poly_i$, $\PolyA_j$, we compute an
$\pth[]{\epsA, O(1/\epsA^2), O(n/\epsA^2) }$-approximation
$\aFunc_{ij}$ to the overlap function of $\Poly_i$ and $\PolyA_j$,
using \lemref{approx:convex:polygons}, where $\epsA = \eps/k^2$.

Next, as each function $\aFunc_{ij}$ is defined by an arrangement
defined by $O(1/\epsA^2)$ polygons, we overlay all these arrangements
together, and compute for each face of the arrangement the function
$\aFunc = \sum_{i,j} \aFunc_{ij}$. Inside such a face this function is
the same, and it is a quadratic function. We then find the global
maximum of this function, and return it as the desired approximation.


\paragraph{Analysis -- Quality of approximation.}

For any translation $\trans$, we have that 
\begin{align*}
    \cardin{\MakeBig \AreaOverlap{\Poly}{\trans + \PolyA} -
       \aFunc(\trans)}%
    &\leq%
    \sum_{i=1}^k \sum_{j=1}^k \cardin{\MakeBig
       \AreaOverlap{\Poly_i}{\trans + \PolyA_j} -
       \aFunc_{ij}(\trans)}%
    \leq%
    \sum_{i=1}^k \sum_{j=1}^k \epsA \maxOverlap{\Poly_i}{\PolyA_j}%
    \InESAVer{\\&}%
    \leq%
    \epsA k^2 \maxOverlap{\Poly}{\PolyA}%
    \InNotESAVer{\\&}%
    \leq%
    \eps \maxOverlap{\Poly}{\PolyA}.
\end{align*}

\paragraph{Analysis -- Running time.}
Computing each of the $k^2$ approximation function, takes $\ds O\big(
(k/\eps)^{2} n \big)$ time. Each one of them is a $\pth[]{\eps/k,
   O(k^2/\eps^2), O(k^2 n/\eps^2) }$-approx\-\si{imation}, which means
that the final arrangement is the overlay of $O(k^4/\eps^2)$ convex
polygons, each of complexity $O(k^2 n/\eps^2)$. In particular, any
pair of such polygons can have at most $O(k^2 n/\eps^2)$ intersection
points, and thus the overall complexity of the arrangement of these
polygons is
\begin{math}
    N = O \pth{ \pth{k^4/\eps^2}^2 (k^2n/\eps^2)} = O\pth{
       k^{10}\eps^{-6} n}.%
\end{math}
Computing this arrangement can be done by a standard sweeping
algorithm. Observing that every vertical line crosses only $O(k^4
/\eps^2)$ segments, imply that the sweeping can be done in $O\pth{
   \log (k/\eps) }$ time per operation, which implies that the overall
running time is
\begin{align*}
    O\pth{ k^2 \frac{k^2}{\eps^2} n + N \log \frac{k}{\eps}}
    \InESAVer{&}%
    =%
    O\pth{ \frac{k^{10}}{\eps^6} n \log \frac{k}{\eps} }.
\end{align*}

\paragraph{The result.}
\begin{theorem}
    \thmlab{main}%
    Given two simple polygons $\Poly$ and $\PolyA$ of total complexity
    $n$, one can compute a translation which $\eps$-approximates the
    maximum area of overlap of $\Poly$ and $\PolyA$. The time required
    is $O(c' n)$ where $\ds \Biggl. c' = \frac{k^10}{\eps^6}\log
    \frac{k}{\eps}$, where $k$ is the minimum number of convex
    polygons in the decomposition of $\Poly$ and $\PolyA$.

    More specifically, one gets a data-structure, such that for any
    query translation $\trans$, one can compute, in $O( \log n)$ time,
    an approximation $\aFunc(\trans)$, such that
    $\cardin{\aFunc(\trans) - \AreaOverlap{\Poly}{\PolyA}} \leq \eps
    \maxOverlap{ \Poly}{\PolyA}$, where $\maxOverlap{ \Poly}{\PolyA}$
    is the maximum area of overlap between $\Poly$ and $\PolyA$.
\end{theorem}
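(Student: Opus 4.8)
The plan is to combine the per-pair approximation of \lemref{approx:convex:polygons} with the convex decomposition, exploiting that area of overlap is additive over an interior-disjoint decomposition. First I would run the algorithm of Keil and Snoeyink \cite{ks-otbcd-02} to split $\Poly$ and $\PolyA$ into the minimum number of interior-disjoint convex pieces $\Poly_1,\dots,\Poly_k$ and $\PolyA_1,\dots,\PolyA_k$; this costs $O(n+k^2\min(k^2,n))$ time and is dominated by the final bound. The structural fact driving everything is that, since the pieces tile each polygon with disjoint interiors,
\[
\AreaOverlap{\Poly}{\trans+\PolyA}=\sum_{i=1}^{k}\sum_{j=1}^{k}\AreaOverlap{\Poly_i}{\trans+\PolyA_j}
\]
for every translation $\trans$. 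Setting $\epsA=\eps/k^2$, I would apply \lemref{approx:convex:polygons} to each of the $k^2$ pairs $(\Poly_i,\PolyA_j)$ to obtain approximations $\aFunc_{ij}$, each piecewise-quadratic over an arrangement of convex polygons, form $\aFunc=\sum_{i,j}\aFunc_{ij}$, and return the translation maximizing $\aFunc$.

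For the quality bound I would use the triangle inequality together with the additivity above. Summing the $k^2$ per-pair guarantees of \lemref{approx:convex:polygons} gives
\[
\cardin{\AreaOverlap{\Poly}{\trans+\PolyA}-\aFunc(\trans)}\le\sum_{i,j}\epsA\,\maxOverlap{\Poly_i}{\PolyA_j}.
\]
The only inequality needing an argument is $\maxOverlap{\Poly_i}{\PolyA_j}\le\maxOverlap{\Poly}{\PolyA}$, which holds because $\Poly_i\subseteq\Poly$ and $\PolyA_j\subseteq\PolyA$ force $\AreaOverlap{\Poly_i}{\trans+\PolyA_j}\le\AreaOverlap{\Poly}{\trans+\PolyA}$ pointwise; maximizing over $\trans$ and summing yields $\sum_{i,j}\maxOverlap{\Poly_i}{\PolyA_j}\le k^2\maxOverlap{\Poly}{\PolyA}$. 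With $\epsA=\eps/k^2$ the total error is then at most $\eps\maxOverlap{\Poly}{\PolyA}$, as required.

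The running-time step is where the real work lies, and I expect it to be the main obstacle. The function $\aFunc$ is constant-quadratic on each face of the overlay of the $k^2$ arrangements produced by \lemref{approx:convex:polygons}, so I must (i) bound the combinatorial complexity $N$ of this overlay and (ii) show it can be built in time near-linear in $n$. For (i), each $\aFunc_{ij}$ lives over an arrangement of $O(1/\epsA^2)$ convex polygons of complexity $O(n/\epsA^2)$; since any two convex polygons meet in a number of points proportional to their complexity, collecting the bookkeeping over all pairs gives an overlay of $O(k^4/\eps^2)$ convex polygons of complexity $O(k^2 n/\eps^2)$ each, hence $N=O\left((k^4/\eps^2)^2\cdot k^2 n/\eps^2\right)=O(k^{10}\eps^{-6}n)$, which is crucially \emph{linear} in $n$. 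For (ii), the subtle point is that I would not invoke a generic arrangement construction (which could be quadratic in $N$), but a plane sweep whose status stores only currently crossed edges: each convex polygon contributes at most two edges to any vertical line, so only $O(k^4/\eps^2)$ edges cross the sweep line at once. Therefore every event costs $O(\log(k/\eps))$, and charging one event per arrangement vertex the sweep runs in $O(N\log(k/\eps))=O(k^{10}\eps^{-6}n\log(k/\eps))=O(c'n)$ time; the $k^2$ invocations of \lemref{approx:convex:polygons}, each $O((k/\eps)^2 n)$, are subsumed. The crux is exactly this: the per-event cost depends only on $k$ and $\eps$, not on $n$, which is what keeps the dependence on $n$ linear rather than quadratic.

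Finally, for the query data structure I would preprocess the overlay with a planar point-location structure, storing in each face the quadratic expression of $\aFunc$ valid there. A query translation $\trans$ is answered by locating its containing face in $O(\log N)=O(\log n)$ time (absorbing the $\log(k/\eps)$ term into the constant) and evaluating the stored quadratic; the additive error bound $\cardin{\aFunc(\trans)-\AreaOverlap{\Poly}{\trans+\PolyA}}\le\eps\maxOverlap{\Poly}{\PolyA}$ then follows immediately from the quality analysis above.
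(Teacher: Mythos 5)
Your proposal is correct and follows essentially the same route as the paper: minimum convex decomposition via Keil--Snoeyink, per-pair $(\eps/k^2,\cdot,\cdot)$-approximations from \lemref{approx:convex:polygons}, summation with the triangle inequality for the error bound, and a plane sweep of the overlay whose per-event cost is $O(\log(k/\eps))$ because only $O(k^4/\eps^2)$ convex-polygon edges cross any vertical line. The only additions are explicit justifications (additivity of overlap over the interior-disjoint pieces, and $\maxOverlap{\Poly_i}{\PolyA_j}\le\maxOverlap{\Poly}{\PolyA}$) and the point-location structure for queries, all of which the paper leaves implicit.
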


Note, that our analysis is far from tight. Specifically, for the sake
of simplicity of exposition, it is loose in several places as far as
the dependency on $k$ and $\eps$.


\subsection*{Acknowledgments}

The authors would like to thank the anonymous referees for their
insightful comments. In particular, the improved construction of
\lemref{approx:small:large} was suggested by an anonymous referee.

\newcommand{\etalchar}[1]{$^{#1}$}
 \providecommand{\CNFX}[1]{ {\em{\textrm{(#1)}}}}%
  \providecommand{\CNFSoCG}{\CNFX{SoCG}}%
  \providecommand{\CNFCCCG}{\CNFX{CCCG}}%
  \providecommand{\CNFFOCS}{\CNFX{FOCS}}%


\begin{thebibliography}{dBCG{\etalchar{+}}04}

\bibitem[ABT{\etalchar{+}}96]{abtszs-osacp-96}
D.~Avis, P.~Bose, G.~T. Toussaint, T.~C. Shermer, B.~Zhu, and J.~Snoeyink.
\newblock On the sectional area of convex polytopes.
\newblock In {\em Proc. 12th Annu. Sympos. Comput. Geom.\CNFSoCG}, pages
  411--412, 1996.

\bibitem[ACKY14]{acky-ocprm-14}
H.-K. Ahn, S.-W. Cheng, H.~J. Kweon, and J.~Yon.
\newblock Overlap of convex polytopes under rigid motion.
\newblock {\em Comput. Geom. Theory Appl.}, 47(1):15--24, 2014.

\bibitem[ACP{\etalchar{+}}07]{acpsv-motpc-07}
H.-K. Ahn, \href{http://www.win.tue.nl/~ocheong}{O.~{Cheong}}, C.-D. Park, C.-S. Shin, and A.~Vigneron.
\newblock Maximizing the overlap of two planar convex sets under rigid motions.
\newblock {\em Comput. Geom. Theory Appl.}, 37(1):3--15, 2007.

\bibitem[ACR13]{acr-mocpt-13}
H.-K. Ahn, S.-W. Cheng, and I.~Reinbacher.
\newblock Maximum overlap of convex polytopes under translation.
\newblock {\em Comput. Geom. Theory Appl.}, 46(5):552--565, 2013.

\bibitem[AFRW98]{afrw-mcsrs-98}
{\href{http://www.inf.fu-berlin.de/inst/ag-ti/members/alt.en.html}{H.~{Alt}}}, U.~Fuchs, \href{http://page.mi.fu-berlin.de/~rote/}{G.~{Rote}}, and G.~Weber.
\newblock Matching convex shapes with respect to the symmetric difference.
\newblock {\em Algorithmica}, 21:89--103, 1998.

\bibitem[AG00]{ag-dgsmi-00}
{\href{http://www.inf.fu-berlin.de/inst/ag-ti/members/alt.en.html}{H.~{Alt}}} and \href{http://geometry.stanford.edu/member/guibas/}{L.~J.~Guibas}.
\newblock Discrete geometric shapes: {Matching}, interpolation, and
  approximation.
\newblock In J{\"o}rg-R{\"u}diger Sack and Jorge Urrutia, editors, {\em
  Handbook of Computational Geometry}, pages 121--153. Elsevier, 2000.

\bibitem[BH01a]{bh-eamvb-01}
\href{http://www.cs.technion.ac.il/~barequet/}{G.~{Barequet}} and \href{http://www.uiuc.edu/~sariel}{S.~{{Har-Peled}}}.
\newblock Efficiently approximating the minimum-volume bounding box of a point
  set in three dimensions.
\newblock {\em J. Algorithms}, 38:91--109, 2001.

\bibitem[BH01b]{bh-pctmh-01}
\href{http://www.cs.technion.ac.il/~barequet/}{G.~{Barequet}} and \href{http://www.uiuc.edu/~sariel}{S.~{{Har-Peled}}}.
\newblock Polygon containment and translational min-hausdorff-distance between
  segment sets are 3sum-hard.
\newblock {\em Internat. J. Comput. Geom. Appl.}, 11(4):465--474, 2001.

\bibitem[BJ02]{bj-dpch-02}
G.~S. Brodal and R.~Jacob.
\newblock Dynamic planar convex hull.
\newblock In {\em Proc. 43th Annu. IEEE Sympos. Found. Comput. Sci.\CNFFOCS},
  pages 617--626, 2002.

\bibitem[CEH07]{ceh-fgsms-07}
\href{http://www.win.tue.nl/~ocheong}{O.~{Cheong}}, \href{http://www.cs.arizona.edu/~alon/}{A.~{Efrat}}, and \href{http://www.uiuc.edu/~sariel}{S.~{{Har-Peled}}}.
\newblock On finding a guard that sees most and a shop that sells most.
\newblock {\em \href{http://link.springer-ny.com/link/service/journals/00454/}{Discrete Comput. {}Geom.}}, 37(4):545--563, 2007.

\bibitem[CL13]{cl-smurm-13}
S.-W. Cheng and C.-K. Lam.
\newblock Shape matching under rigid motion.
\newblock {\em Comput. Geom. Theory Appl.}, 46(6):591--603, 2013.

\bibitem[CLM05]{clm-sga-05}
\href{http://www.cs.princeton.edu/~chazelle/}{B.~{Chazelle}}, D.~Liu, and A.~Magen.
\newblock Sublinear geometric algorithms.
\newblock {\em SIAM J. Comput.}, 35(3):627--646, 2005.

\bibitem[dBCD{\etalchar{+}}98]{bcdkt-cmotc-98}
\href{http://www.win.tue.nl/~mdberg/}{M.~de~{Berg}}, \href{http://www.win.tue.nl/~ocheong}{O.~{Cheong}}, O.~Devillers, M.~van Kreveld, and M.~Teillaud.
\newblock Computing the maximum overlap of two convex polygons under
  translations.
\newblock {\em Theo. Comp. Sci.}, 31:613--628, 1998.

\bibitem[dBCG{\etalchar{+}}04]{bcgkov-maotu-04}
\href{http://www.win.tue.nl/~mdberg/}{M.~de~{Berg}}, S.~Cabello, P.~Giannopoulos, C.~Knauer, R.~van Oostrum, and R.~C.
  Veltkamp.
\newblock Maximizing the area of overlap of two unions of disks under rigid
  motion.
\newblock In {\em Scandinavian Workshop on Algorithms}, volume 3111 of {\em
  LNCS}, pages 138--149, 2004.

\bibitem[GK92]{gk-iojrc-92}
P.~Gritzmann and V.~Klee.
\newblock Inner and outer $j$-radii of convex bodies in finite-dimensional
  normed spaces.
\newblock {\em \href{http://link.springer-ny.com/link/service/journals/00454/}{Discrete Comput. {}Geom.}}, 7:255--280, 1992.

\bibitem[{Har}00]{h-twpa-00}
\href{http://www.uiuc.edu/~sariel}{S.~{{Har-Peled}}}.
\newblock Taking a walk in a planar arrangement.
\newblock {\em SIAM J. Comput.}, 30(4):1341--1367, 2000.

\bibitem[{Har}11]{h-gaa-11}
\href{http://www.uiuc.edu/~sariel}{S.~{{Har-Peled}}}.
\newblock {\em Geometric Approximation Algorithms}, volume 173 of {\em
  Mathematical Surveys and Monographs}.
\newblock Amer. Math. Soc., 2011.

\bibitem[KS02]{ks-otbcd-02}
J.~M. Keil and J.~Snoeyink.
\newblock On the time bound for convex decomposition of simple polygons.
\newblock {\em Internat. J. Comput. Geom. Appl.}, 12(3):181--192, 2002.

\bibitem[MSW96]{msw-aotp-96}
\href{http://www.cs.umd.edu/~mount/}{D.~M. {Mount}}, R.~Silverman, and A.~Y. Wu.
\newblock On the area of overlap of translated polygons.
\newblock {\em Computer Vision and Image Understanding: CVIU}, 64(1):53--61,
  1996.

\bibitem[ST94]{st-epcp-94}
\href{http://www.math.tau.ac.il/~michas}{M.~{Sharir}} and S.~Toledo.
\newblock Extremal polygon containment problems.
\newblock {\em Comput. Geom. Theory Appl.}, 4:99--118, 1994.

\bibitem[Tou83]{t-sgprc-83}
G.~T. Toussaint.
\newblock Solving geometric problems with the rotating calipers.
\newblock In {\em Proc. IEEE MELECON '83}, pages A10.02/1--4, 1983.

\bibitem[Vig14]{v-gosaf-14}
A.~Vigneron.
\newblock Geometric optimization and sums of algebraic functions.
\newblock {\em ACM Trans. Algo.}, 10(1):4:1--4:20, 2014.

\end{thebibliography}

\appendix

\section{Proof of \lemref{approx:scheme}}
\apndlab{fast:convex:approx}

\lemref{approx:scheme} follows by the work of Ahn \etal
\cite{acpsv-motpc-07}. For the sake of completeness we include the
details here (our construction is somewhat different).

\begin{proof}
    The width of $\Poly$ is realized by two parallel lines $\Line_0$
    and $\Line_{m+1}$ at a distance of at most $\WidthX{\Poly}$ from
    each other. These two lines can be computed in linear time using
    rotating caliper \cite{t-sgprc-83} (and one of them must pass
    through an edge of $\Poly$). Similarly, it is easy to compute the
    two extreme points of $\Poly$ in the direction of $\Line$. Adding
    the supporting lines to $\Poly$ through these points that are
    orthogonal to $\Line$, results in a rectangle, that has five
    vertices of $\Poly$ on its boundary, and we mark these vertices.
    Next, we slice this rectangle by $m$ parallel lines $\Line_1,
    \ldots, \Line_m$ into $m+1$ slices of the same width. We mark all
    the intersections of these lines with $\Poly$. Next, let $\Poly'$
    be the polygon formed by the convex-hull of all the marked
    points. It is easy to verify that $\Poly'$ can be computed in linear
    time.
    
    \parpic[r]{\includegraphics{\si{figs/approx_polygon}}}

    Clearly, $\Poly' \subseteq \Poly$. For any point $\pnt \in \Poly$,
    consider the two lines $\Line_i$ and $\Line_{i+1}$ that $\pnt$ is
    contained between them. Consider the projection of $\pnt$ into
    $\Line_i$ and $\Line_{i+1}$, denoted by $\pnt_i$ and $\pnt_{i+1}$
    respectively. If any of these two projections are inside $\Poly$,
    then we are done, as this portion of $\Poly$ is inside $\Poly'$,
    and the distance of projection is at most
    $\WidthX{\Poly}/m$. Thus, the only remaining possibility is that
    both projections are outside $\Poly$. But this implies that
    $\Poly$'s extreme point in the parallel direction to $\Line$ (or
    its reverse) must be in $\pnt$'s slice. In particular, the segment
    $\pnt_i \pnt_{i+1}$ (that includes $\pnt$) must intersect $\Poly$,
    and furthermore, since the aforementioned extreme point is a
    vertex of $\Poly'$, it follows that this segments must intersect
    $\Poly'$, implying the claim.
    \InESAVer{ \qed}
\end{proof}

\section{Counter example for some natural approaches}
\apndlab{counter:example}

\begin{quote}
    ``I have not failed. I've just found 10,000 ways that won't
    work.''  --― Thomas Edison
\end{quote}

Here we present a counterexample to the natural for our problem:
Decompose the two polygons into their convex parts, approximate the
parts, and then find the maximum overlap for the two reconstituted
polygons.

\begin{figure}
    \begin{tabular}{cc}
        \includegraphics[width=0.45\linewidth]{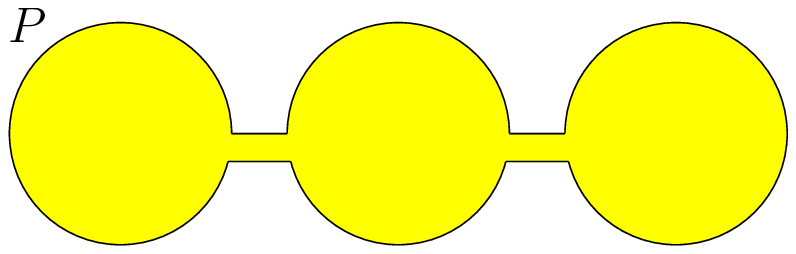}%
        &%
        \includegraphics[page=2,width=0.45\linewidth]{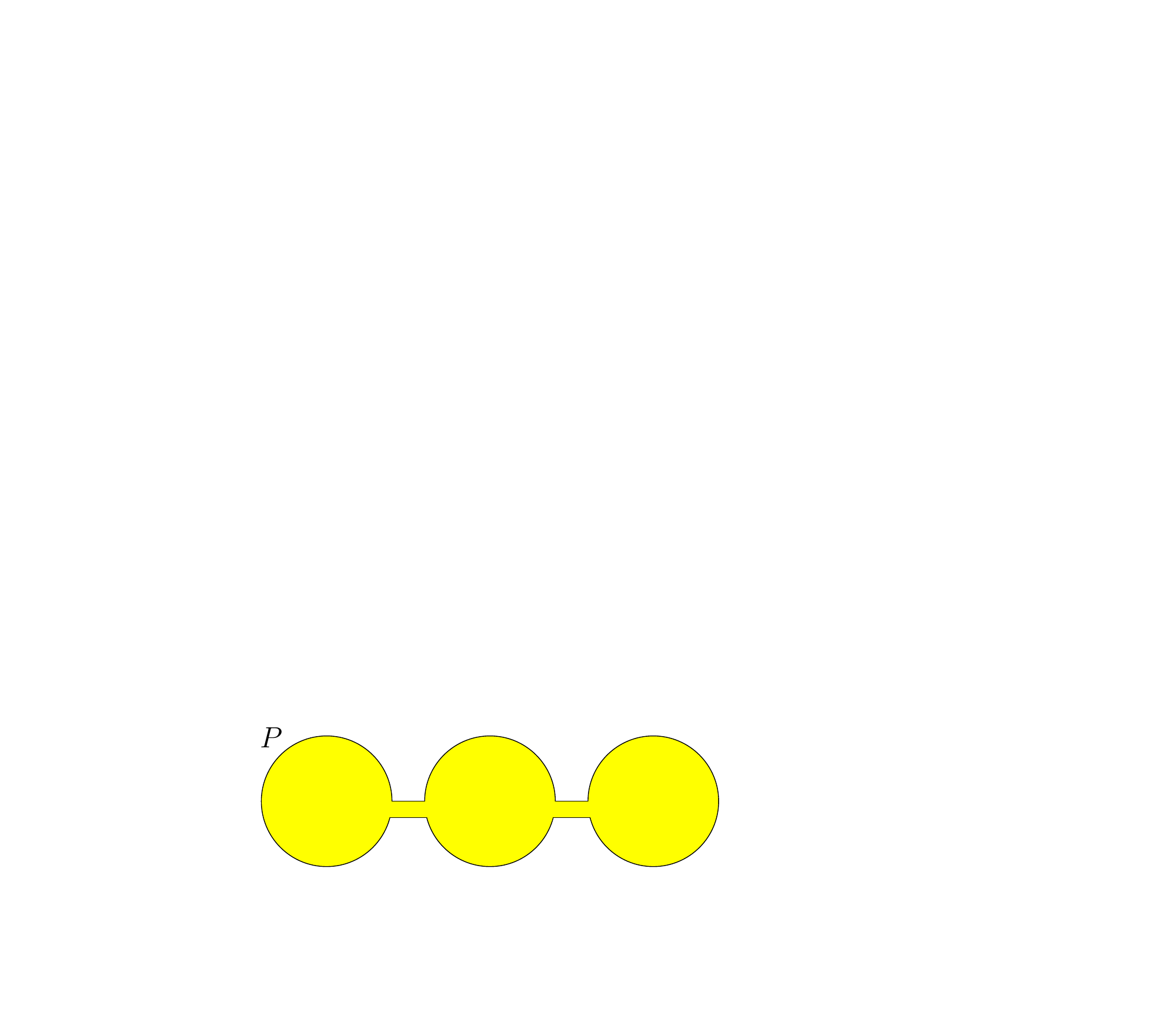}%
        \\%
        (A) & (B) %
        \\%
        \includegraphics[page=3]{figs/no_good_2}%
        &%
        \includegraphics[page=4]{figs/no_good_2}%
        \\%
        (C) & (D) %
    \end{tabular}
    \caption{Counterexample to naive approach for approximating
       overlap. %
       (A), (B) Input polygons. (C) Optimal placement.  (D)
       Approximate polygons fails to preserve optimal solution.}
    \figlab{counter:example}
\end{figure}

So, consider the polygon $P$, depicted in \figref{counter:example}
(A), and its arch-nemesis $Q$ in \figref{counter:example} (B).  The
optimal overlap placement is depicted in \figref{counter:example} (C).

But any naive piecewise approximation of $P$ by pieces will fail,
because it creates cavities that might contain the whole overlap. For
example, one possible solution for the approximate polygon $P'$ and
the maximum overlap, is depicted in \figref{counter:example} (D).

In particular, any placement that makes the approximate polygon $P'$
cover two of the disks of $Q$, would fail to cover the third disk of
$Q$. so, in this case, the best approximation suggested by the
reviewer is (roughly) $2/3$. This can be made to be arbitrarily bad by
enlarging $k$.


\paragraph{Another approach that does not work.}

Another natural idea is to pick from each pair of polygons $P_i$ and
$Q_j$ a set of directions such that if you approximate each pair with
these directions than the pair overlap is approximated
correctly. Note, however, that if the polygons are of completely
different sizes then, inherently, you need unbounded number of
directions if the maximum overlap of the original (non-convex
polygons) align the pieces far from their piecewise overlap maximum.

\InsertAppendixOfProofs

\end{document}